\documentclass[11pt,a4paper]{article}
\usepackage[latin1]{inputenc}
\usepackage{fullpage}  
\usepackage{hyperref}

\usepackage[compatibility=false]{caption}
\usepackage{subcaption}
\usepackage{tikz}
\usetikzlibrary{arrows}

\usepackage{amsmath}  
\usepackage{amsfonts}  
\usepackage{amssymb}   
\usepackage{amsthm}    

\newtheorem{theorem}{Theorem}
\newtheorem{proposition}{Proposition}

\newtheorem{lemma}{Lemma}

\newtheorem{definition}{Definition}

\newcommand{\BO}[1]{{O}\mathopen{}\left(#1\right)\mathclose{}}

\newcommand{\BT}[1]{{\Theta}\mathopen{}\left(#1\right)\mathclose{}}
\newcommand{\BOM}[1]{{\Omega}\mathopen{}\left(#1\right)\mathclose{}}

\usepackage{thm-restate}
\newcommand{\A}{\mathcal A}
\newcommand{\B}{\mathcal B}

\title{A Lower Bound Technique for Communication in BSP\footnote{This work was supported,
in part, by MIUR of Italy under project AMANDA, and by University of Padova under projects STPD08JA32,
CPDA121378/12 and CPDA152255/15. A preliminary
version of this work~\cite{BilardiSS12} was presented at the
\emph{18th International European Conference on Parallel and
Distributed Computing (Euro-Par~2012).}}}

\author{Gianfranco Bilardi\thanks{University of Padova. \hbox{E-mail}:~{\tt bilardi@dei.unipd.it\textnormal{,} silvestri@dei.unipd.it}.}
\and Michele Scquizzato\thanks{KTH Royal Institute of Technology. \hbox{E-mail}:~{\tt mscq@kth.se}.}
\and Francesco Silvestri\footnotemark[2]
}

\begin{document}

\maketitle

\begin{abstract}
Communication is a major factor determining the performance of
algorithms on current computing systems; it is therefore valuable to
provide tight lower bounds on the communication complexity of
computations. This paper presents a lower bound technique for the
communication complexity in the bulk-synchronous parallel (BSP) model
of a given class of DAG computations. The derived bound is expressed
in terms of the switching potential of a DAG, that is, the number of
permutations that the DAG can realize when viewed as a switching
network.
The proposed technique yields tight lower bounds for the fast Fourier transform (FFT),
and for any sorting and permutation network. A stronger bound is also derived for
the periodic balanced sorting network, by applying this technique to suitable subnetworks.
Finally, we demonstrate that the switching potential captures communication
requirements even in computational models different from BSP,
such as the I/O model and the LPRAM.
\end{abstract}

\section{Introduction}\label{sec:introduction}

A substantial fraction of the time and energy cost of a parallel
algorithm is due to the exchange of information between processing and
storage elements. As in all endeavors where performance is pursued, it
is important to be able to evaluate the distance from optimality of a
proposed solution.  In this spirit, we consider lower bounds on the
amount of communication that is required to solve some computational
problems on a distributed-memory parallel system. We model the machine
using the standard bulk-synchronous parallel (BSP) model of
computation~\cite{Valiant90}, which consists of a collection of $p$
processors, each equipped with an unbounded private memory and
communicating with each other through a communication network.

We focus on a metric, called \emph{BSP communication complexity} and
denoted $H$, defined as the sum, over all the supersteps of a BSP
algorithm, of the maximum number of messages sent or received by any
processor (a quantity usually referred to as the \emph{degree} of a
superstep). This metric captures a relevant component of the cost of
BSP computations.  We propose the \emph{switching potential} technique
to derive lower bounds on the BSP communication complexity, which is
applicable under a number of assumptions. (a) The computation can be
modeled in terms of a directed acyclic graph (DAG), whose nodes
represent operations (of both input/output and functional type) and
whose arcs represent data dependencies. The resulting lower bound
holds for all BSP evaluations of the given DAG, which vary depending
on the superstep and the processor chosen for the evaluation of an
operation, and the way (routing path and schedule of the message along
such a path) in which a value is sent from the processor that computes
it to a processor that utilizes it. (b) The internal nodes of the DAG
are restricted to have the same number of incoming and outgoing arcs,
so that they can be thought of as switches that can establish any
one-to-one relation between the incoming arcs and the outgoing
arcs. The switching potential is the number of permutations that can
be established, by means of arc-disjoint paths, between the arcs
incident on the input nodes and those incident on the output
nodes. (c) During the BSP computation, each node of the DAG is
evaluated exactly once (\emph{no recomputation}). (d) The number of
input/output nodes that are mapped to the same BSP processor satisfies
a suitable upper bound, essentially ensuring that the computation is
distributed over at least two processors.


We illustrate the versatility of the switching potential technique in
several ways.  We apply it to derive tight lower bounds on the BSP
communication complexity of the fast Fourier transform (FFT), and of
sorting and permutation networks.  We also show how, for some DAGs,
the lower bound on communication can be boosted by composing the
results provided by the technique for suitable parts of the
DAG. Finally, we demonstrate that the switching potential of a DAG
captures communication requirements which can lead to lower bounds
even in models different from BSP.

\subsection{A Perspective on Previous Work}\label{sec:previous}

The impact of communication on performance has been extensively
investigated. Even a concise review of all the relevant work would go
far beyond the scope of this paper. Here, we will simply attempt to
place our work in the broader context and then review more closely the
results more directly comparable to ours.

A first division can be drawn between studies that consider the
communication requirements inherent to computational problems and
studies that consider the communication requirements of specific
algorithms. Examples of results that apply to computational problems
are the crossing-sequence lower bounds on time for Turing
machines~\cite{Hennie65}, Grigoriev's flow lower bounds on space-time
tradeoffs~\cite{Savage95}, and information-flow lower bounds on the
area-time tradeoff in VLSI~\cite{Yao79,Thompson80,BilardiP86}. One
ingredient of some of the lower bounds in our paper is based on the
information flow of the cyclic shift problem, originally studied
in~\cite{Vuillemin83}.

Our switching potential approach targets lower bounds for classes of
implementations of specific algorithms, typically modeled by
\emph{computation DAGs}, where nodes represent input and functional
operations, and arcs represent data dependencies.  A significant
distinction lies in whether the rules defining the class of
implementations allow for \emph{recomputation}, that is, for the
ability to evaluate a given functional node of the DAG multiple
times. A number of efforts have focused on data movement between
levels of the memory hierarchy,
e.g.,~\cite{HongK81,AggarwalV88,Savage95,BilardiPD00,RanjanSZ11} (with
recomputation)
and~\cite{BilardiP99,RanjanSZ11,BallardDHS11,BallardDHS12} (without
recomputation). Other papers have investigated data movement between
processing elements in distributed computing,
e.g.,~\cite{PapadimitriouU87,AggarwalCS90,MacKenzieR98} (with
recomputation),
and~\cite{Goodrich99,BilardiP99,IronyTT04,BallardDHS11,BallardDHS12,ScquizzatoS14}
(without recomputation).  Re-execution of operations is of interest
because it is known that, in some models of computation, it can be
exploited in order to asymptotically improve performance.  For
example, recomputation is known to have the potential to reduce the
space requirements of computations in the context of classical
pebbling games (see, e.g., \cite{Savage98}), to enhance the
performance of simulations among networks (see \cite{KochLMRRS97} and
references therein), to enable area-universal VLSI
computations~\cite{BhattBP08} with constant slowdown, and to reduce
the number of write operations between levels of the memory
hierarchy~\cite{BlellochFGGS16}.  However, as in most of the present
paper, recomputation is often ruled out, as a simplifying assumption
that still affords the development of insights on the problem, when
the general case proves hard to tackle.

The metrics capturing communication requirements are typically quite
sensitive to the underlying model of computation; even within the
same model, several metrics can be meaningful. In this paper, we will
focus on models where the computation is carried out by a set of
processors with local storage, interconnected by a communication
medium. Examples of such models are bounded degree
networks~\cite{Leighton92}, BSP~\cite{Valiant90},
LogP~\cite{CullerKPSSSSE96}. A pioneering paper in this
area~\cite{PapadimitriouU87} introduced two important metrics and
analyzed them for the diamond DAG: the total number $c$ of messages
exchanged among the processors, called \emph{communication}, and the
maximum number $d$ of messages exchanged as part of the evaluation of
some path of the DAG, called \emph{communication delay}.  Both metrics
can be trivially minimized to zero by assigning the entire DAG to just
one processor. However, interesting tradeoffs arise between the two
metrics and the number of steps $t$ of a parallel schedule. For
example, for the diamond DAG with $n$ inputs, $c=\Omega(n^3/t)$ and
$d=\Omega(n^2/t)$, allowing for recomputation.

Another interesting metric is the maximum number ${\hat
  H}_{\mathrm{rc}}$ of messages received by any processor. Clearly,
${\hat H}_{\mathrm{rc}} \geq c/p$, since each message is received by
some processor. Next, we outline a lower bound technique for ${\hat
  H}_{\mathrm{rc}}$, based on the graph-theoretic notion of
dominator. If $W$ and $U$ are sets of nodes of a given DAG, $W$ is
called a \emph {dominator} of $U$ if every path from an input node to
a node of $U$ includes a node of $W$. Let $D(k)$ be the maximum size
of a set $U$ of nodes that has a dominator $W$ of size $k$. In a
seminal paper~\cite{HongK81}, formulating a framework for the study
I/O complexity in hierarchical memories, the number $Q(S)$ of data
transfers between a ``cache'' of size $S$ and ``main memory'' is shown
to satisfy the lower bound $Q(S) \geq \lfloor \nu S/D(2S) \rfloor$,
where $\nu$ is the number of (input and functional) nodes.  The
dominator technique can be adapted to establish that, if the DAG is
evaluated by $p$ processors, each of which initially stores at most
$q$ inputs, then at least one processor receives ${\hat
  H}_{\mathrm{rc}} \geq D^{-1}(\nu/p)-q$ messages.  The argument goes
as follows: (a) a processor $P$ can receive, as part of the input or
from other processors, the value of at most $(q+ {\hat
  H}_{\mathrm{rc}})$ nodes of the DAG; (b) these nodes must dominate
all the nodes whose value is computed by $P$; (c) by definition of
$D$, at most $D(q+ {\hat H}_{\mathrm{rc}})$ DAG nodes are dominated by
any given set of $(q+ {\hat H}_{\mathrm{rc}})$ nodes; (d) at least one
processor must compute no fewer than $\nu/p$ of the $\nu$ nodes in the
DAG. Combining these premises one obtains that $D(q+ {\hat
  H}_{\mathrm{rc}}) \geq \nu/p$, whence the stated bound on ${\hat
  H}_{ \mathrm{rc}}$.  Variants of the outlined argument have been
used, for matrix multiplication, in~\cite{AggarwalCS90,IronyTT04,ScquizzatoS14}.
Dominator-based lower bounds do allow for recomputation.

By definition, the BSP communication complexity $H$ satisfies $H \geq
{\hat H}_{ \mathrm{rc}}$. In particular, $H$ can be larger than ${\hat H}_{
  \mathrm{rc}}$, if different processors receive messages in different
supersteps. This difference can play a role, as in the case of the BSP
computation of an $n$-input radix-two FFT DAG~\cite{CooleyT65}, with
$\nu = n \log(2n)$ nodes,\footnote{In this paper $\log x$ denotes the
  logarithm to the base two.} a case which has motivated our work.
An FFT implementation on BSP is known~\cite{Valiant90} for which
\begin{equation}\label{eq:valiantub}
H = O({\hat H}_{\mathrm{rc}}) = \BO{\frac{n \log n}{p \log(n/p)}},
\end{equation}
for $1 \leq p \leq n/2$, with each processor initially holding $n/p$
(consecutive) inputs.  Valiant's analysis is based on the well known
property that, if $s$ divides $n$ (a power of 2), then the $n$-input
FFT DAG can be viewed as a sequence of $\lceil \log n / \log s \rceil$
stages, each (i) consisting of $n/s$ disjoint $s$-input FFTs (possibly
incomplete in the last stage) and (ii) being connected to the next one
by a permutation. Letting $s=n/p$ and assigning the $p$ ($n/p$)-input
FFTs within a stage to different processors, the resulting BSP
algorithms executes $K = \lceil \log n / \log (n/p) \rceil$
supersteps, each of degree $O(n/p)$, yielding bound~\eqref{eq:valiantub}. 
Making use of the result~\cite{HongK81} that in the FFT DAG $k$ nodes
dominate at most $D(k) \leq 2k \log k$ nodes,\footnote{In the Appendix we
show that this bound can by improved to $D(k) \leq k \log 2k$.}
the dominator technique yields the lower bound
\begin{equation}\label{eq:hk}
H \geq {\hat H}_{\mathrm{rc}} = \BOM{\frac{n \log n}{p \log((n/p) \log n)}},
\end{equation}
assuming that at most $q = \beta (n \log n)/(p \log((n/p) \log n ))$
inputs are initially available to any processor, for a suitably small
constant $\beta$. We observe that the lower bound~\eqref{eq:hk} does
not match the upper bound~\eqref{eq:valiantub}, for either $H$ or
${\hat H}_{\mathrm{rc}}$, when $p = n/2^{o(\log \log n)}$.

An alternate dominator-based lower bound can be obtained when the
local memory of each processor has at most $m$ locations~\cite{BallardDGLOST16}.
During an interval in which it receives $m$ messages, a processor can evaluate
the at most $D(2m)$ nodes dominated by the (at most) $m$ nodes whose
value is received together with the (at most) $m$ nodes whose value is
locally stored at the beginning of the interval. Then, if $\nu$ nodes
are evaluated by $p$ processors, ${\hat H}_{\mathrm{rc}} = \BOM{m
  \lfloor \nu/(pD(2m)) \rfloor}$. For the FFT, $\nu=n \log (2n)$ and
$D(2m) \leq 2m \log (4m)$, hence ${\hat H}_{\mathrm{rc}} = \BOM{m
  \lfloor (n \log n)/(p 2m \log (4m)) \rfloor}$.  If we let $m^*$
denote the value of $m$ for which the argument of the floor equals 1,
we can see that, for $m \leq m^*$, ${\hat H}_{\mathrm{rc}} = \BOM{(n
  \log n)/(p \log m)}$. In particular, for $m = \Theta(n/p)$, the
lower bound matches the upper bound~\eqref{eq:valiantub}.  For $m >
m^*$, the bound vanishes. At first, it may be puzzling that, for large
enough $m$, the memory based bound does not reproduce
bound~\eqref{eq:hk}; the reason is that, unlike the latter, the former
bound does not depend upon the assumption that, initially, only a
limited amount of input is available to each processor. 

A model with some similarities to BSP is the LPRAM
of Aggarwal et al.~\cite{AggarwalCS90}, where $p$ processors with unbounded local
storage are synchronized by a global clock and communicate via a
shared memory, rather than directly through a network. The
communication metric is the number of steps (cycles) $T_c$ in which at
least one processor reads from or writes to the shared memory.  A
straightforward adaptation of a well-known decomposition strategy for
the FFT achieves for $T_c$ an upper bound of the same form
as~\eqref{eq:valiantub}.  A lower bound of the same form is also
established. In addition to being developed for a different model, the
argument follows a route different than ours: a lower bound of the
same form is first established for sorting (assuming no input element
is ever kept by two processors at the same time), then claimed (by
analogy) for permutation networks, and finally adapted to the FFT
network, by exploiting the property established in~\cite{WuF81} that
the cascade of three FFT networks has the topology of a full
permutation network.

Finally, we mention that, motivated by the investigation of area-time
trade-offs in VLSI, a number of lower bounds have been established on
the information flow of the (multidimensional) discrete Fourier
transform (DFT) computed either exactly, on finite rings, or
approximately, on the complex field (see, e.g.,~\cite{BilardiF11} and
references therein).  These results apply to any algorithm, rather
than just to the radix-two specialization of the FFT considered in the
present paper and related work.  When adapted to computing the DFT on
BSP, with information measured in words capable of encoding a ring
element, the known $\BOM{n}$ word lower bound on the information flow
through the bisection of the system does imply that $H \geq {\hat
  H}_{\mathrm{rc}} = \BOM{n/p}$, assuming that no processor inputs (or
outputs) more than $n/2$ values. This bound is of the same order as
the dominator-based bound~\eqref{eq:hk} when $\log ((n/p) \log n ) =
\BOM{\log n}$ and is weaker otherwise, but it does hold under less
stringent constraints on the I/O protocol.
  
In summary, with respect to the outlined state of the art, our
objective is to develop lower bound techniques for the communication
complexity in BSP, $H$, capable to close the gap between the dominator
lower bound~\eqref{eq:hk} and the best known upper
bound~\eqref{eq:valiantub} for the FFT, as well as to weaken the
assumptions on the input protocol under which the lower bound is
established.

\subsection{Overview of Results}\label{sec:ourresults}

The main contribution of this paper is the \emph{switching potential
technique}, to obtain communication lower bounds for DAG
computations in the BSP model. The proposed technique applies to DAGs,
named switching DAGs, with $n$ input nodes where all nodes, except for
inputs and outputs, have out-degree equal to the in-degree.  Such a
graph $G=(V,E)$ can be viewed as a {switching network}~\cite{Savage98}
for which a switching size $N$ and a switching potential $\gamma$ are
defined.  The \emph{switching size} $N$ is the sum of the out-degrees
of the input nodes or, equivalently, the sum of the in-degree of the
output nodes.  The \emph {switching potential} $\gamma$ is the number
of different ways in which $N$ tokens initially placed on the $N$
outgoing arcs of the input nodes can be brought on the $N$ incoming
arcs of the output nodes, by moving them along arc-disjoint
paths. Intuitively, the switching potential is a measure of the
permuting ability of the graph. Its impact on the BSP communication complexity
is quantified in the following theorem.
\begin{restatable}{theorem}{mainlb}\label{thm:mainlb}
Let $\A$ be any algorithm that evaluates \emph{without recomputation}
a switching DAG $G=(V,E)$ on a BSP with $p$ processors. Let
$N$, $\gamma$, and $\Delta$ be respectively the switching size, the
switching potential, and the maximum out-degree of any node of $G$.
If the sum of the in-degree of the output nodes evaluated by every
processor is at most $U$, then the BSP communication complexity of
algorithm $\A$ satisfies
\begin{align}\label{eqn:mainlb}
H_{\A} \geq \left\{
 \begin{array}{ll}
 \frac{\log(\gamma/(U!)^{N/U})}{\Delta p \log (N/p)}& \,\text{if $p \leq N/e$,}\\
 \frac{e \log(\gamma/(U!)^{N/U})}{\Delta N \log e}& \,\text{otherwise.}
 \end{array}
 \right.
\end{align}
where, by definition, $\gamma \leq N!$ and $N/p \leq U \leq N$.
\end{restatable}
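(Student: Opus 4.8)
The plan is to bound the switching potential $\gamma$ from above by a function of $H_{\A}$, and then to invert that bound. Because $\A$ uses no recomputation, it induces a well-defined map $\mu$ from the nodes of $G$ to the $p$ processors, where $\mu(v)$ is the unique processor that evaluates $v$; call an arc $(u,v)\in E$ a \emph{cut arc} when $\mu(u)\neq\mu(v)$. The value carried on every cut arc must be transmitted from processor $\mu(u)$ to processor $\mu(v)$ in the course of the computation. Since in superstep $i$ each processor sends at most $h_i$ messages, each of which can be taken to deliver the values on the at most $\Delta$ arcs leaving a common node toward a common processor, and since $H_{\A}=\sum_i h_i$, the number $C$ of cut arcs satisfies $C=\BO{\Delta\, p\, H_{\A}}$. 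This estimate uses only the layout $\mu$ and the aggregate communication, not the data actually processed.

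The heart of the proof is a combinatorial bound linking $C$ to $\gamma$. First I would note that if $\sigma$ is a realizable permutation and $P_1,\dots,P_N$ is an arc-disjoint family of paths realizing it, then, the paths being arc-disjoint, each cut arc lies on at most one $P_j$; hence these $N$ paths change processor at most $C$ times in total. Assign to $\sigma$ the \emph{crossing pattern} recording, for each token, the sequence of processors its path visits (but not the specific output arc on which it finally lands). This assignment is at most $\prod_i m_i!$-to-one, where $m_i\le U$ denotes the number of output arcs hosted by processor $i$, so the number of crossing patterns attainable with at most $C$ crossings is at least $\gamma/\prod_i m_i!\ge \gamma/(U!)^{N/U}$, the last step being the standard rearrangement/convexity bound (from $\sum_i m_i=N$, $m_i\le U$, and $N/p\le U$ since the $N$ output arcs lie on at most $p$ processors). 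It then remains to upper bound the number of crossing patterns with at most $C$ crossings: a pattern is determined by the list of its crossing events, and a careful accounting caps each such event at $\BO{\log(N/p)}$ bits — enough to record which token crosses and how its route continues, $N/p$ being the mean number of tokens per processor. Hence $\log\!\big(\gamma/(U!)^{N/U}\big)=\BO{C\log(N/p)}=\BO{\Delta\, p\, H_{\A}\log(N/p)}$, and solving for $H_{\A}$, with the constants in the crossing count tracked so as to obtain the denominator $\Delta p\log(N/p)$, yields the first branch of~\eqref{eqn:mainlb}.

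For the second branch, observe that when $p>N/e$ the quantity $\log(N/p)$ lies below $\log e$ and may even be non-positive, so the per-crossing estimate must be replaced by the $\BO{\log e}$ bound valid once the token density $N/p$ drops below $e$, used with the bound on the number of crossings appropriate to that regime; this gives $H_{\A}\ge e\log(\gamma/(U!)^{N/U})/(\Delta N\log e)$, which meets the first branch at $p=N/e$. The side conditions are immediate: $\gamma\le N!$ because any realized arrangement of the $N$ tokens belongs to $S_N$, and $N/p\le U\le N$ from the pigeonhole remark above together with $m_i\le N$. I expect the main obstacle to be the bound on the number of crossing patterns in the second step: extracting from ``few processor crossings'' a tight count of realizable permutations, with the logarithmic rate $\log(N/p)$ obtained exactly rather than, say, $\log N$ — here the bulk-synchronous structure, each superstep being an $h_i$-relation that limits how much the distribution of tokens can change in one round, should be what prevents the count from exploding.
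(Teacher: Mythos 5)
Your overall architecture coincides with the paper's: compare a lower bound of $\gamma/(U!)^{N/U}$ on the number of distinguishable ``communication outcomes'' (the paper's redistribution potential $\eta_{K+1}$; your crossing patterns) against an upper bound determined by the communication volume. The first half is sound --- your $\prod_i m_i!$-to-one argument is exactly the paper's Lemma~\ref{lem:etalb}, and $C\le\Delta p H_{\A}$ is a legitimate place to absorb the factor $\Delta$ (the paper instead passes to an envelope-game algorithm $\B$ with $H_{\B}\le\Delta H_{\A}$).

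The gap is in the step you yourself flag as the main obstacle: the claim that each crossing event costs $\BO{\log(N/p)}$ bits. Stated per event this is false: the number of choices for which token crosses at a single event equals the number of envelopes currently held by the sending processor, which can be as large as $N$ (nothing forces the envelopes to stay evenly spread), so a per-event encoding only gives $\log N$ and hence the weaker denominator $p\log N$. The correct accounting is an amortization over the $p$ processors \emph{within one superstep}, not over individual crossings. One must first split every superstep of degree $h$ into $h$ supersteps of degree $1$ (via Hall's theorem on the message multigraph, the paper's Lemma~\ref{lem:degree1}), so that each unit of $H$ is a round in which every processor sends at most one envelope; then the number of achievable placements grows per round by at most $\prod_{i:t_i\ge 1}t_i$, where $t_i$ is the number of envelopes at processor $i$, and since $\sum_i t_i=N$ convexity yields $\prod_i t_i\le (N/p')^{p'}\le (N/p)^p$ when $p\le N/e$, and $\le e^{N/e}$ otherwise because $(N/x)^x$ peaks at $x=N/e$ --- this maximization is also the actual source of your second branch, which you assert rather than derive. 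This argument deliberately charges a full $(N/p)^p$ even to rounds with few crossings, which is precisely why a per-crossing-event ledger in terms of $C$ cannot reproduce the $\log(N/p)$ rate. Without the degree-1 normalization and the convexity step, the proof does not close.
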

For the proof, we introduce the \emph{envelope game}, where a set of
envelopes, initially positioned on the input nodes, are moved to the
output nodes according to some given rules.  The evaluation of a DAG
in the BSP model is viewed as a run of this game; a lower bound is
derived for any BSP algorithm playing the game.

At the heart of the switching potential technique lies a counting
argument in combination with an indistinguishability argument,
somewhat similar to the approach used by Aggarwal and Vitter~\cite{AggarwalV88} to study
the I/O complexity of computations, and later applied to study the
complexity of communications in the LPRAM~\cite{AggarwalCS90}.
Our technique has the advantage that it can be directly
applied to any specific (switching) DAG, while the former approaches
require that a suitable combination of copies of the DAG under
consideration yield a full permutation network.

Even when $\gamma$ is large, bound~\eqref{eqn:mainlb} may become weak
due to high values of $U$. Intuitively, when $U$ is high, many
permutations can be realized by redistributing the data within the
processors, without much interprocessor communication. On the other
hand, if the different permutations realizable by a given DAG map
somewhat uniformly the inputs to the outputs, then the communication
complexity can be considerable, even for rather small values of
$\gamma$. This phenomenon has been investigated in~\cite{Vuillemin83}
for classes of permutations that form a transitive group. A special
case is the class of the cyclic shifts, which are permutations where
 all input values are shifted by a given amount (we refer to 
 Section~\ref{sec:cyclicshift} for a formal definition). 
 We establish the following result:
\begin{restatable}{theorem}{cycliclb}\label{thm:cycliclb}
Let $G=(V,E)$ be a DAG with $n$ input nodes and $n$ output nodes,
capable of realizing all the $n$ \emph{cyclic shifts}, with respect to
some fixed correspondence between inputs and outputs. Let $\A$ be an
algorithm that evaluates $G$ (possibly, \emph{with recomputation}) on
a BSP with $p \geq 2$ processors, such that initially each input is
available to exactly one processor.  Then, if some processor initially
stores (exactly) $q$ inputs, \emph{or} some processor evaluates
(exactly) $q$ outputs, the BSP communication complexity of algorithm
$\A$ satisfies
\[
H_{\A} \geq \frac{\min\{q,n-q\}}{2}.
\]
\end{restatable}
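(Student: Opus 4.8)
The plan is to exploit an information-flow argument based on the cyclic shift problem, in the style of Vuillemin~\cite{Vuillemin83}. Fix a processor $P$ and, depending on which of the two hypotheses holds, let $P$ be one that initially stores $q$ inputs or one that evaluates $q$ outputs; the two cases are symmetric by reversing the roles of inputs/outputs (and of ``send'' and ``receive''), so I will treat the input case and then indicate the dual. Partition the input nodes into the set $I_P$ of the $q$ inputs initially available to $P$ and the set $\overline{I_P}$ of the remaining $n-q$ inputs, which are initially available only to other processors. Let $c = \min\{q, n-q\}$.

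First I would argue that, when the DAG realizes the cyclic shift by amount $k$, a certain number of input values must cross between $P$ and the rest of the machine. The key combinatorial observation is that, for a uniformly random shift amount $k \in \{0, 1, \dots, n-1\}$, the expected number of inputs in $I_P$ whose value (under shift $k$) is destined for an output \emph{not} evaluated by $P$ is at least something proportional to $c$ — more precisely, I would count pairs (input in $I_P$, output not at $P$) and use that the set of shifts is transitive on the $n$ positions, so each such pair is ``linked'' by exactly one shift; summing over all $n$ shifts gives a total of $|I_P| \cdot (n - q') \geq$ (relevant product), where $q'$ is the number of outputs at $P$. Hence there is at least one shift amount $k^\*$ for which at least $c/2$ input values held by $P$ must reach outputs outside $P$, \emph{and} symmetrically at least $c/2$ input values held outside $P$ must reach outputs at $P$ (handling both hypotheses in one stroke by choosing $k^\*$ good for whichever product is active). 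Each such value contributes at least one message sent or received by $P$, since $P$ holds the input but not the destination output (or vice versa), so on input sequences differing only in these coordinates the algorithm must transmit distinct information across $P$'s boundary. Because the shift amount is fixed before the computation (the DAG is the same; only the input/output labelling selects the realized permutation — or, if $k$ is data-dependent, one restricts to inputs encoding shift $k^\*$), the number of messages crossing the boundary of $P$ on that instance is at least $c/2$, and each of them is either sent or received by $P$, so $H_{\A} \geq c/2$.

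The delicate point — and the main obstacle — is making the indistinguishability argument fully rigorous in the presence of \emph{recomputation}, which is explicitly allowed here: a value may be computed several times at several processors, so one cannot simply say ``the value of input $i$ travels along a single path.'' The fix is to phrase everything in terms of information rather than token movement: consider the family of input assignments obtained by varying only the coordinates in $I_P$ (for the first half of the bound) over a large alphabet; the outputs evaluated outside $P$ must, collectively, determine those $\geq c/2$ coordinates under shift $k^\*$, yet $P$'s only way to influence those outside outputs is through the messages it sends; a standard counting/entropy argument (at most $2^{(\text{messages sent by }P)\cdot(\text{word size})}$ distinct behaviors can be induced) then forces $P$ to send $\Omega(c)$ messages — and symmetrically receive $\Omega(c)$ — giving $H_{\A}\ge c/2$ after tracking the constants. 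One must be careful that recomputation does not let other processors reconstruct $P$'s private inputs ``for free''; this is exactly where the hypothesis that each input is initially available to \emph{exactly} one processor is used, so the only source of information about $I_P$'s values, for the rest of the machine, is the messages $P$ sends. Finally I would double-check the boundary cases $q=0$, $q=n$, and $p=2$ to confirm the bound $\min\{q,n-q\}/2$ holds verbatim, and note that the factor $1/2$ (rather than $1$) absorbs the ``expected value $\Rightarrow$ exists a good shift'' step.
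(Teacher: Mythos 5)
Your proposal follows essentially the same route as the paper's (Lemma~\ref{lem:cycliclb} plus Theorem~\ref{thm:cycliclb}): Vuillemin's counting argument over the $n$ cyclic shifts, using that each input--output pair is linked by exactly one shift, an averaging step to extract a single shift with many pairs separated by the boundary of the distinguished processor $P$, and the observation that---since each input is initially held by exactly one processor---every such separated pair forces a word of information, hence a message, across that boundary even when recomputation is allowed. The one slip is the intermediate claim that a single shift $k^*$ yields at least $c/2$ crossings in \emph{each} direction simultaneously (false when, e.g., $P$ evaluates all, or none, of the outputs); what the counting actually gives, and what suffices, is that the \emph{sum} $|I_0||O_1|+|I_1||O_0|$ of separated pairs over all shifts is at least $n\min\{q,n-q\}$, so some shift separates at least $\min\{q,n-q\}$ pairs, and the factor $1/2$ then comes from charging each crossing as either a send or a receive of $P$.
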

Some DAGs of interest happen to both exhibit a high switching
potential $\gamma$ and realize all cyclic shifts. For such DAGs, a
combination of Theorems~\ref{thm:mainlb} and~\ref{thm:cycliclb}
yields communication lower bounds under very mild assumptions
on the input/output protocol. An example is the $n$-input FFT, for
which $\gamma = 2^{n (\log n -1)}$ and the lower bound takes the
form, when $p \leq 2n/e$,
\begin{equation}\label{eqn:fftlb}
H_{\text{FFT}} > \frac{n \log(n/2)}{8 p \log(2 n/p)},
\end{equation}
as long as no processor evaluates more than $n/2$ output nodes, and
there is no recomputation. Lower bound~\eqref{eqn:fftlb} is the first
lower bound on the BSP communication complexity of the FFT that
asymptotically matches upper bound~\eqref{eq:valiantub} for any number
of processors $p \leq 2n/e$.  The technique based on the capability of
realizing all cyclic shifts also enables the extension of the
dominator-based lower bound~\eqref{eq:hk} to milder assumptions on the
input/output protocol. The dominator-based bound is not asymptotically
tight for $p = n/2^{o(\log \log n)}$, but it remains of interest when
recomputation is allowed.

We illustrate the versatility of the switching potential technique by
applying it to computations different from the FFT.  Sorting and
permutation networks naturally exhibit a high switching potential.
The corresponding BSP communication complexity lower bound is asymptotically
similar to bound~\eqref{eqn:fftlb}, and to the best of our knowledge
is the first known result for these computations. An asymptotically equivalent lower bound
was previously derived for BSP sorting in~\cite{Goodrich99}, but only for
algorithms with supersteps of degree $\BT{n/p}$ and input evenly
distributed among the processors. We also show how the switching
potential analysis can some time yield higher lower bounds if
separately applied to suitable parts of the DAG; in particular, we prove
for the BSP communication complexity of the periodic balanced sorting
network~\cite{DowdPRS89} a bound higher than the one derived for all
sorting networks.

The switching potential technique can be used, with some minor
changes, to derive lower bounds in other computational models besides
the BSP. Specifically, we apply the technique to a parallel variant of the I/O model
which includes, as special cases, both the I/O model and the LPRAM model.

In addition to the well-known general motivations for lower bound
techniques, we stress that striving for tight bounds for the whole
range of model's parameters has special interest in the study of
so-called \emph{oblivious} algorithms, which are specified without
reference to such parameters, but are designed with the goal of
achieving (near) optimality for wide ranges of the parameters.
Notable examples are cache-oblivious algorithms~\cite{FrigoLPR12},
multicore-oblivious algorithms~\cite{ChowdhuryRSB13},
resource-oblivious algorithms~\cite{ColeR12,ColeR17} and, closer
to the scenario of this paper, network-oblivious
algorithms~\cite{BilardiPPSS16}, where algorithms are
designed and analyzed on a BSP-like model. In fact, many BSP
algorithms are only defined or analyzed for a number of processors $p$
that is sufficiently small with respect to the input size $n$. For the
analysis of the FFT DAG, it is often assumed $p \leq \sqrt{n}$,
where the complexity is $\BT{n/p}$. Our results allow for the
removal of such restrictions.

A preliminary version of this paper appeared in~\cite{BilardiSS12}.
The current version contains an expanded discussion of previous
work, and provides full proofs of all claims, a significantly
simpler and slightly improved analysis of the switching potential technique,
applications of this technique to more case studies, as well as an adaptation
of it to a different model of computation.

\subsection{Paper Organization}
Section~\ref{sec:preliminary} introduces the BSP model and the concept
of switching DAG.  Section~\ref{sec:envelope} formulates the envelope
game, a convenient framework for studying the communication occurring
when evaluating a switching DAG.  Section~\ref{sec:switching} develops
the switching potential technique culminating with the proof of
Theorem~\ref{thm:mainlb}, which provides a lower bound on the BSP
communication complexity of a switching DAG, in terms of its switching
potential.  Section~\ref{sec:cyclicshift} establishes
Theorem~\ref{thm:cycliclb}, which provides a lower bound on the BSP
communication complexity of a DAG that can realize all cyclic shifts.
These two results are then applied, in
Section~\ref{sec:applications}, to the FFT DAG and to sorting
and permutation networks. Section~\ref{sec:io} extends 
the switching potential technique to computational models different from BSP.
Finally, in Section~\ref{sec:conclusions}, we draw some conclusions and discuss
directions for further work.

\section{Models of Computation}\label{sec:preliminary}
This section introduces the BSP model of parallel computation and the
class of computation DAGs for which our lower bound technique applies.

\subsection{The BSP Model}\label{sec:BSP}
The \emph{bulk-synchronous parallel} (BSP) model of computation was
introduced by Valiant~\cite{Valiant90} as a ``bridging model'' for
general-purpose parallel computing, providing an abstraction of both
parallel hardware and software.  It has been widely studied (see,
e.g.,~\cite{Tiskin11} and references therein) together with a number
of variants (such as D-BSP~\cite{delaTorreK96,BilardiPP07},
BSP*~\cite{BaumkerDH98}, E-BSP~\cite{JuurlinkW98}, and
BSPRAM~\cite{Tiskin98}) that aim at capturing data and communication
locality by basing the cost function on different communication
metrics.

The architectural component of the model consists of $p$ processors
$P_1,P_2,\dots,P_p$, each equipped with an unbounded local memory,
interconnected by a communication medium. The execution of a BSP
algorithm consists of a sequence of phases, called \emph{supersteps}:
in one superstep, each processor can perform operations on data
residing in its local memory, send/receive messages (each occupying a
constant number of words) to/from other processing elements and, at
the end, execute a global synchronization instruction.  Messages sent
during a superstep become visible to the receiver at the beginning of
the next superstep.

The running time of the $j$-th superstep is expressed in terms of two
parameters, $g$ and $\ell$, as $T_j = w_j + h_j g + \ell$, where $w_j$
is the maximum number of local operations performed by any processor
in the $j$-th superstep, and $h_j$ (usually called the \emph{degree}
of superstep $j$) is the maximum number of messages sent or received
by any processor in the $j$-th superstep. If the time unit is chosen to
be the duration of a local operation, then parameter $g$ is defined to be
such that the communication medium can deliver the messages of a
superstep of degree $h$ in $hg$ units of time, so that $1/g$ can be
viewed as measuring the available bandwidth of the communication medium,
whereas parameter $\ell$ is an upper bound on the time required for global
barrier synchronization. The \emph{running time} $T_{\mathcal A}$ of a
BSP algorithm $\mathcal A$ is the sum of the times of its supersteps and
can be expressed as $W_{\mathcal A} + H_{\mathcal A}g + S_{\mathcal
  A}\ell$, where $S_{\mathcal A}$ is the number of supersteps,
$W_{\mathcal A} = \sum_{j=1}^{S_{\mathcal A}} w_j$ is the \emph{local
  computation complexity}, and $H_{\mathcal A} =
\sum_{j=1}^{S_{\mathcal A}} h_j$ is the \emph{BSP communication
  complexity}. In this paper, we study the latter metric, which often
represents the dominant component of the running time.

\subsection{Switching DAGs}\label{sec:dag}
A \emph{computation DAG} $G = (V,E)$ is a directed acyclic graph where
nodes represent input and functional {operations} and arcs represent
{data dependencies}. More specifically, an arc $(u,v) \in E$ indicates
that the value produced by the operation associated with $u$ is one of
the operands of the operation associated with $v$, and we say that $u$
is a \emph{predecessor} of $v$ and $v$ a \emph{successor} of $u$. The
number of predecessors of a node $v$ is called its \emph{in-degree}
and denoted $\delta_{\mathrm{in}}(v)$, while the number of its
successors is called its \emph{out-degree} and denoted
$\delta_{\mathrm{out}}(v)$. A node $v$ is called an \emph{input} if
$\delta_{\mathrm{in}}(v)=0$ and an \emph{output} if
$\delta_{\mathrm{out}}(v)=0$.  We denote by $V_{\mathrm{in}}$ and
$V_{\mathrm{out}}$ the set of input and output nodes, respectively.
The remaining nodes are said to be \emph{internal} and their set is
denoted by $V_{\mathrm{int}}$.  

Of special interest for our developments are situations where the
computation executed by a given algorithm on a given input can be
viewed as embedding an \emph{evaluation} of a given computation DAG
$G$. Informally, this means that, during that execution, all the nodes
of $G$ are evaluated, respecting the dependencies specified by the
arcs. A bit more formally, we say that the execution of a BSP
algorithm on a specific input ${\bf x}$ \emph{evaluates} a given
computation DAG if, to any $v \in V$, is associated a set ${\cal
  S}(v)$ of processor-time pairs such that: (a) if $(t,P) \in {\cal
  S}(v)$, then, at time $t$, processor $P$ evaluates node $v$, by
either an input or a functional operation; (b) if $(t,P) \in {\cal
  S}(v)$ and $(u,v) \in E$, then there is a $(t',P') \in {\cal S}(u)$
such that the result of the evaluation of $u$ by $P'$ at time $t'$ is
effectively used as an operand by $P$ at time $t$.  Between time $t'$
and time $t$, the value of $u$ in question may be moved to different
processors and memory locations: the sequence of instructions
implementing such moves will be denoted as ${\cal
  S}(t',P',t,P)$. Taken together, the sets ${\cal S}$ define the
processing and communication \emph{schedule} of the evaluation of
$G$. We say that the evaluation is \emph{without recomputation} if
each node of $G$ is evaluated exactly once, that is, if ${\cal S}(v)$
is a singleton for every $v$.

A few observations may help provide some perspective on the
above notions of evaluation and schedule.  A little reflection will
show that an algorithm execution that embeds an evaluation of $G$
(possibly with recomputation) does not necessarily embed an
evaluation of $G$ without recomputation, whence forbidding
recomputation effectively restricts lower bounds results. We remark
that an execution of an algorithm that embeds an evaluation of $G$
may well contain additional operations not modeled by $G$ (for
example the additional operations may be instrumental to
constructing the DAG from the input, or the input size).  In
general, the execution of the same algorithm on different inputs may
embed the evaluation of different DAGs or of different schedules of
the same DAG. However, there are interesting algorithms that
evaluate the same DAG, with the same schedule, for all inputs of the
same size $n$.  Notable examples include the FFT, network algorithms
for sorting and permutations, standard matrix multiplication in a
semiring, and Strassen's matrix multiplication on a ring.

A number of graph-theoretic properties of the DAG can be related to
processing, storage, and communication requirements of the
underlying algorithm, as well as to its amount of parallelism. One
such property is the switching potential, which we introduce and
relate to the communication complexity of a BSP algorithm that
evaluates a DAG of the type defined next.

\begin{definition}\label{def:swDAG}
A \emph{switching DAG} $G = (V,E)$ is a computation DAG where, for any
internal node $v \in V_{\mathrm{int}}$, we have
$\delta_{\mathrm{out}}(v) =
\delta_{\mathrm{in}}(v)$.  We refer to $n=|V_{\mathrm{in}}|$ as to the
\emph{input size} of $G$, and introduce the \emph{switching size} $N$
of $G$ defined as
\[
N = \sum_{v \in V_{\mathrm{in}}} \delta_{\mathrm{out}}(v) = \sum_{v
  \in V_{\mathrm{out}}}
\delta_{\mathrm{in}}(v),
\]
where the equality between the two summations is easily established.
\end{definition}

Consider the set of arcs $E_{\mathrm{in}} = E \cap (V_{\mathrm{in}}
\times V)$, outgoing from the input nodes, and the set of arcs
$E_{\mathrm{out}} = E \cap (V \times V_{\mathrm{out}})$, incoming into
the output nodes. Let us now number both the arcs in $E_{\mathrm{in}}$
and those in $E_{\mathrm{out}}$ from $1$ to $N$, in some arbitrarily
chosen order. Then, to any partition of $E$ into a set of arc-disjoint
paths there corresponds a permutation $\rho =
(\rho(1),\rho(2),\ldots,\rho(N))$ of $(1,2,\dots,N)$, where $\rho(j)$
is the (number of the) last arc (in $E_{\mathrm{out}}$) of the unique
path whose first arc (in $E_{\mathrm{in}}$) is numbered $j$. See Figure~\ref{fig:rho}.

\begin{figure}
\centering
  \begin{subfigure}[b]{0.35\textwidth}
    \centering
    \resizebox{\linewidth}{!}{
      \begin{tikzpicture}[every node/.style={circle,draw},>=stealth,rotate=90]
		\node (a1) at (0,7) {};
		\node (a2) at (0,6) {};
		\node (a3) at (0,5) {};
		\node (a4) at (0,4) {};
		\node (b1) at (1.5,7) {};
		\node (b2) at (1.5,6) {};
		\node (b3) at (1.5,5) {};
		\node (b4) at (1.5,4) {};
		\node (c1) at (3.0,7) {};
		\node (c2) at (3.0,6) {};
		\node (c3) at (3.0,5) {};
		\node (c4) at (3.0,4) {};

		\draw[->] (a1) edge [densely dashed] (b1);
		\draw[->] (a2) edge [dashdotted] (b2);
		\draw[->] (a1) edge [loosely dashed] (b2);
		\draw[->] (a2) edge [dotted] (b1);
		\draw[->] (a3) edge (b3);
		\draw[->] (a4) edge [dashed] (b4);
		\draw[->] (a3) edge [densely dotted] (b4);
		\draw[->] (a4) edge [loosely dotted] (b3);
		\draw[->] (b1) edge [dotted] (c1);
		\draw[->] (b2) edge [loosely dashed] (c4);
		\draw[->] (b1) edge [densely dashed] (c3);
		\draw[->] (b2) edge [dashdotted] (c2);
		\draw[->] (b3) edge [loosely dotted] (c3);
		\draw[->] (b4) edge [dashed] (c4);
		\draw[->] (b3) edge (c1);
		\draw[->] (b4) edge [densely dotted] (c2);
      \end{tikzpicture}
    }
    \caption{$\rho = (5,7,1,3,2,4,6,8)$.}
    \label{a}
  \end{subfigure}
  \qquad\qquad
  \begin{subfigure}[b]{0.35\textwidth}
    \centering
    \resizebox{\linewidth}{!}{
      \begin{tikzpicture}[every node/.style={circle,draw},>=stealth,rotate=90]
		\node (a1) at (0,7) {};
		\node (a2) at (0,6) {};
		\node (a3) at (0,5) {};
		\node (a4) at (0,4) {};
		\node (b1) at (1.5,7) {};
		\node (b2) at (1.5,6) {};
		\node (b3) at (1.5,5) {};
		\node (b4) at (1.5,4) {};
		\node (c1) at (3.0,7) {};
		\node (c2) at (3.0,6) {};
		\node (c3) at (3.0,5) {};
		\node (c4) at (3.0,4) {};

		\draw[->] (a1) edge [densely dashed] (b1);
		\draw[->] (a2) edge [dashdotted] (b2);
		\draw[->] (a1) edge [loosely dashed] (b2);
		\draw[->] (a2) edge [dotted] (b1);
		\draw[->] (a3) edge (b3);
		\draw[->] (a4) edge [dashed] (b4);
		\draw[->] (a3) edge [densely dotted] (b4);
		\draw[->] (a4) edge [loosely dotted] (b3);
		\draw[->] (b1) edge [densely dashed] (c1);
		\draw[->] (b2) edge [loosely dashed] (c4);
		\draw[->] (b1) edge [dotted] (c3);
		\draw[->] (b2) edge [dashdotted] (c2);
		\draw[->] (b3) edge (c3);
		\draw[->] (b4) edge [densely dotted] (c4);
		\draw[->] (b3) edge [loosely dotted] (c1);
		\draw[->] (b4) edge [dashed] (c2);
      \end{tikzpicture}
    }
    \caption{$\rho = (1,7,5,3,6,8,2,4)$.}
    \label{b}
  \end{subfigure}
\caption{Examples of two different sets of arc-disjoint paths in the FFT DAG on $n = 4$ inputs and switching size $N = 8$.
Arcs in $E_{\mathrm{in}}$ and in $E_{\mathrm{out}}$ are numbered from $1$ to $N = 8$, left to right.}\label{fig:rho}
\end{figure}
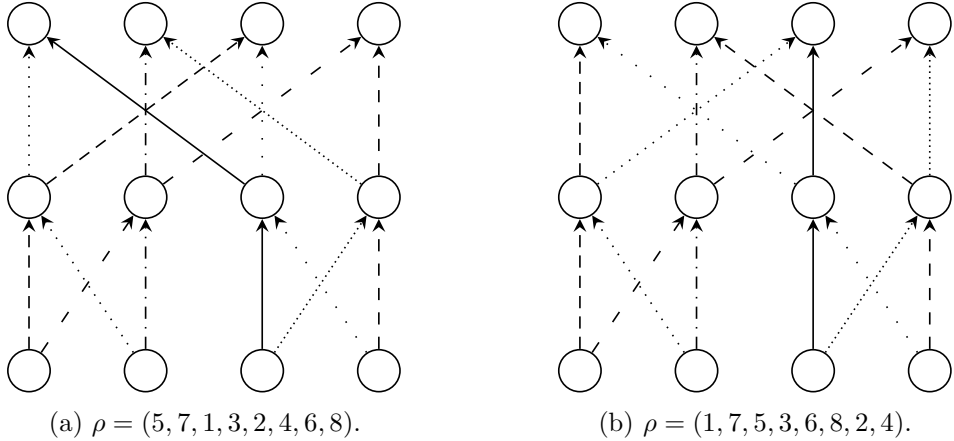

\begin{definition}\label{def:swPotential}
The \emph{switching potential} $\gamma$ of a switching DAG $G=(V,E)$
is the number of permutations $\rho$ corresponding to (one or more)
partitions of $E$ into arc-disjoint paths.
\end{definition}
Intuitively, if we think of each internal node $v$ of the DAG as a
switch that can be configured to connect its incoming arcs to its
outgoing arcs in any one-to-one correspondence, then switch
configurations uniquely correspond to partitions of $E$ into
arc-disjoint paths. Thus, $N$ items initially positioned on the input
nodes (specifically, $\delta_{\mathrm{out}}(u)$ items on input $u$)
can travel without conflicts and reach the output nodes. Indeed, in
the special case where $\delta_{\mathrm{out}}(u)=1$ for all input
nodes and $\delta_{\mathrm{in}}(v)=1$ for all output nodes, one has $N
= n = |V_{\mathrm{in}}| = |V_{\mathrm{out}}|$ and the switching DAG
becomes a switching network in the traditional sense~\cite{Savage98}.
When all permutations can be realized, that is $\gamma=n!$, then the
switching network is said to be a \emph{permutation} (or,
\emph{rearrangeable}) \emph{network}~\cite{Savage98}. It is a
simple exercise to establish that, for any switching DAG,
\[
\gamma \leq \prod_{v \in V \setminus V_{\mathrm{out}}} \delta_{\mathrm{out}}(v)
= \prod_{v \in V \setminus V_{\mathrm{in}}} \delta_{\mathrm{in}}(v),
\]
where each products equals the number of distinct partitions into
arc-disjoint paths. The inequality arises when distinct partitions
lead to the same permutation.

\section{The Envelope Game}\label{sec:envelope}

In this section we introduce the \emph{envelope game}, to be played on
a switching DAG.  The goal of the game consists in moving to the
output nodes some envelopes, initially placed on the input nodes,
according to some rules. Informally, the rules force the envelopes to
travel along arc-disjoint paths in a way that all the envelopes that
go through a given node must be simultaneously at that node, at some
time. The envelope game is meant to provide an abstraction of the
evaluation of a DAG without recomputation, which will prove useful in
the study of BSP communication complexity. The spirit is similar to
the one that motivated the introduction of the pebble game to study
space requirements of computations~\cite{PatersonH70,HopcroftPV77}.
\begin{definition}\label{def:envelopeGame}
The \emph{envelope game} on a switching DAG $G$ is defined by the
following rules, which characterize the (legal) \emph{runs} of
  the game. 
\begin{enumerate}
\item A set of $N$ distinguishable \emph{envelopes} is given, with
  exactly $\delta_{\mathrm{out}}(u)$ envelopes initially placed on
  each input node $u \in V_{\mathrm{in}}$ (hence, $N$ is the switching
  size of $G$).
\item The set of envelopes remains invariant during the game; at any
  stage each envelope is at exactly one node of $G$.
\item One move consists in moving one envelope from a node $u$ to a
  node $v$ along an arc $(u,v) \in E$.
\item An arc $(u,v)$ can be used only in one move.
\item An envelope can be moved from a node $u$ only after
  $\delta_{\mathrm{in}}(u)$ envelopes (each arriving from a different
  incoming arc of $u$) have been placed on $u$.
\item The (run of the) game is completed when all envelopes have
  reached an output node (i.e., when exactly $\delta_{\mathrm{in}}(w)$
  envelopes are placed on each output node $w \in V_{\mathrm{out}}$).
\end{enumerate}
\end{definition}
According to Rules 1, 2, 3, and 6, during one run, each envelope
traverses a path from $V_{\mathrm{in}}$ to $V_{\mathrm{out}}$.  Due to
Rule 4, paths traversed by different envelopes are arc-disjoint.  Due
to the property $\delta_{\mathrm{out}}(v)= \delta_{\mathrm{in}}(v)$ of
internal nodes of a switching DAG and to Rule 1 requiring that the
number of envelopes equals the switching size $N$, the paths traversed
by the envelopes yield a partition of the arc set $E$.  Therefore,
each run of the game uniquely identifies a permutation $\rho$
contributing to the switching potential, according to
Definition~\ref{def:swPotential}. Finally, we observe that Rule 5
requires that, for each node $u$, there is a time when all the
envelopes going through $u$ are on $u$.

We observe that, in spite of some similarities, the envelope
game differs from the pebbling game in various significant ways. In
particular, while the number of pebbles on the DAG can change during
the game and the goal is to minimize its (maximum) value, the number
of envelopes is constant throughout the game and the goal is to
count the mappings between the starting and the ending arcs of the
envelope paths by playing the game in all possible
ways. Furthermore, at any time, the number of pebbles on a DAG node
is either 0 or 1, whereas the number of envelopes on any given node
will typically go from 0 to the degree (by unit increments) and then
back to 0 (by unit decrements).

Intuitively, a run of the envelope game can be easily augmented into
an evaluation of the DAG.  We just need to imagine that each envelope
carries a (rewritable) card where, when a node $u$ is computed, its
result is written on the card. Since the envelopes leaving from a node
$u$ are distinct even though in the process of DAG evaluation they
would carry the same value, the communication of envelopes may result
in an overcounting of messages, by a factor at most $\Delta = \max_{v
  \in V} \delta_{\mathrm{out}}(v)$. In fact, there are at most
$\Delta$ envelopes with the same card value moving out from a
node. For many DAGs of interest, $\Delta$ is a small constant (e.g.,
for the FFT, $\Delta=2$), thus the overcounting is well bounded. At
the same time, the distinguishability of the envelopes simplifies the
analysis of the communication requirements of DAG evaluation. The next
lemma establishes that communication lower bounds can be transferred
from the envelope game to DAG evaluation, by showing how the former
can be obtained from the latter.

We say that a BSP algorithm \emph{plays} the envelope game on a
switching DAG $G$ if it satisfies the following conditions: (a) each
node $v \in V$ is assigned to a processor $P(v)$; (b) the envelopes
placed on input node $u$ are initially in $P(u)$; (c) whenever $(u,v)
\in E$ and $P(u) \not= P(v)$, the envelope moved along $(u,v)$ is sent
from $P(u)$ to $P(v)$ (possibly via intermediate processors), after
$P(u)$ has received all the envelopes destined to $u$.

\begin{lemma}\label{lem:game}
Let $\A$ be any algorithm that evaluates, without recomputation, a
switching DAG $G=(V,E)$ with $\Delta = \max_{v \in V}
\delta_{\mathrm{out}}(v)$, on a BSP with $p$ processors, and let
$H_{\mathcal A}$ be its BSP communication complexity. Then, there exists
an algorithm $\B$ that plays the envelope game on $G$ with BSP
communication complexity
\[
H_{\B} \leq \Delta H_{\A}.
\]
\end{lemma}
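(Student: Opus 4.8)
The plan is to construct $\B$ by letting it shadow the schedule of $\A$, replacing every movement of a computed value with the movements of the (at most $\Delta$) envelopes that the envelope game would use to carry that value. Because $\A$ evaluates $G$ \emph{without recomputation}, for each node $v$ the set $\mathcal S(v)$ is a singleton $\{(t_v,P_v)\}$; this makes the node-to-processor map $P(v):=P_v$ well defined, which is condition~(a) of ``playing the game'', and it places the $\delta_{\mathrm{out}}(u)$ envelopes of an input node $u$ initially at $P(u)$, which is condition~(b).

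Next I would fix, for each node $w$, the way $\A$ delivers the value of $w$: for every successor $v$ of $w$ with $P(v)\ne P(w)$, the sub-schedule $\mathcal S(t_w,P(w),t_v,P(v))$ routes that value from $P(w)$ to $P(v)$ as a sequence of processor-to-processor hops, each occurring in a definite superstep. In $\B$ the envelope sitting on the arc $(w,v)\in E$ is sent along exactly these hops, in exactly these supersteps (following the rewritable-card intuition of the preceding discussion, this envelope carries the value of $w$ throughout this leg of its journey). One then checks that the resulting run obeys Definition~\ref{def:envelopeGame}: Rule~4 holds since every arc of $G$ is one data dependency traversed by one envelope; Rule~5 — an envelope may leave $w$ only once all $\delta_{\mathrm{in}}(w)$ envelopes destined to $w$ lie on $w$ — holds because $\A$ computes $w$ at $P(w)$ at the single time $t_w$, after every operand value (hence, in $\B$, every envelope destined to $w$, each routed along $\A$'s delivery of the corresponding predecessor value) has reached $P(w)$; this is also precisely the timing required by condition~(c). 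Rule~6 holds because $\A$ halts having evaluated every node, the outputs included, so every envelope has reached an output node.

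Finally I would bound $H_\B$ superstep by superstep. In a superstep $j$, each message sent (resp.\ received) by a processor $X$ in $\A$ carries the value of a single node $w$ along one hop $X\to Y$ (a constant-size message; for concreteness take it to carry one node's value). The envelopes that $\B$ pushes over this hop in superstep $j$ form a subset of the envelopes occupying the outgoing arcs of $w$, of which there are $\delta_{\mathrm{out}}(w)\le\Delta$; hence $\B$ replaces each message of $\A$ by at most $\Delta$ envelope-messages in the same superstep and between the same ordered pair of processors, so the superstep degrees satisfy $h^\B_j\le\Delta\,h^\A_j$. Summing over all supersteps gives $H_\B=\sum_j h^\B_j\le\Delta\sum_j h^\A_j=\Delta H_\A$. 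I expect the crux of the argument to be exactly this accounting: the envelope routing must be set up — by tying it to $\A$'s own deliveries of node values — so that every BSP hop in every superstep is traversed only by envelopes sharing a common card value, and therefore by at most $\Delta$ of them; the remaining verifications (legality of the run, timing of condition~(c)) are the routine bookkeeping that the no-recomputation hypothesis makes possible.
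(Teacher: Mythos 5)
Your proposal is correct and follows essentially the same route as the paper: assign each node to the unique processor that evaluates it (using no recomputation), replace internal computations by switches, route each envelope along $\A$'s own delivery path for the corresponding value, and observe that each message of $\A$ is replaced by at most $\Delta$ envelope-messages since at most $\delta_{\mathrm{out}}(w)\le\Delta$ envelopes leave a node carrying the same value. Your superstep-by-superstep degree accounting ($h^\B_j\le\Delta\,h^\A_j$) is a slightly more explicit version of the paper's worst-case message count, but the argument is the same.
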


\begin{proof}
In algorithm $\B$, a node $v \in V$ is assigned to the processor
$P(v)$ where $v$ is evaluated by $\A$, which is unique due to the
hypothesis of no recomputation. (1) Initially, for every $u \in
V_{\mathrm{in}}$, $\delta_{\mathrm{out}}(u)$ envelopes are placed on
$u$ (as required by Rule 1) and each envelope is univocally assigned
to an outgoing arc of the respective input node. (2) The computation
in each internal node $u$ is replaced with a switch that sequentially
forwards the $\delta_{\mathrm{in}}(u)$ input envelopes to the
$\delta_{\mathrm{out}}(u)$ output arcs according to some
permutation. (3) For each arc $(u,v)$ where $P(v)$ differs from
$P(v)$, a message is sent from $P(u)$ to $P(v)$ (possibly via intermediate processors) carrying the envelope
moved along $(u,v)$.

We now show that algorithm $\B$ plays the envelope game. By
construction, the $N$ envelopes are set on the input nodes as required
by Rule 1. Rule 2 is satisfied since recomputation is disallowed
(i.e., no new envelope is added) and, in each internal node, all input
envelopes are forwarded to the outgoing arcs (i.e., no
envelope is deleted). Rules 3, 4, and 5 are complied with since a node
$u$ is computed in $\A$ only when all the $\delta_{\mathrm{in}}(u)$
inputs are ready (i.e., every predecessor of $u$ has sent an envelope
to
$u$ in $\B$) and the output values are propagated only to the
$\delta_{\mathrm{out}}(u)$ successors of $u$. Rule 6 is also obeyed
since all output nodes in $G$ are computed by $\A$.

We now show that the BSP communication complexity of $\B$ is at most
$\Delta$ times the BSP communication complexity of $\A$. The first two
modifications do not increase the communication (an envelope can be
locally constructed by a processor and no communication is required
for evaluating a switch).  The third modification can increases the BSP
communication complexity as analyzed next. Consider the case where a
node $u$ is processed by $\A$ on a processor, say $P_0$, while $\ell$
of its successors are processed on a different processor, say $P_1$,
where $1 \leq \ell \leq \Delta$. Then, one message from $P_0$ to $P_1$
is necessary and sufficient to send the output value of $u$ to the
$\ell$ successors in $P_1$ since the output value of a node is the
same for each successor. In contrast, since distinct envelopes are
sent by $u$ to its successors, $\ell$ messages must be sent by $\B$ in
order to forward the $\ell$ envelopes from node $u$ in $P_0$ to the
$\ell$ successors in $P_1$. Therefore, in the worst case, to each
message in $\A$ there correspond $\Delta$ messages in $\B$.
\end{proof}

Typically, a BSP algorithm $\A$ that evaluates a DAG will
implement the same schedule of operations and messages, for any
input of the DAG size $n$; in such a case, the corresponding
algorithm $\B$ that plays the envelope game will be the same for all
inputs as well.  However, Lemma~\ref{lem:game} applies even if DAG
$G$ is evaluated by $\A$ in different ways for different inputs
${\bf x}$, in which case the lemma inequality can be more explicitly
written as $H_{\B({\bf x})} \leq \Delta H_{\A({\bf x})}$.

\section{The Switching Potential Technique}\label{sec:switching}

This section develops the proof of Theorem~\ref{thm:mainlb}, a lower
bound on the communication complexity of any BSP algorithm $\A$ that
evaluates, without recomputation, a switching DAG $G$ with switching
potential $\gamma$. Technically, the lower bound individually
applies to any execution of the algorithm that embeds an evaluation
of $G$. We recall that $n$ denotes the input size and $N$ the
switching size of $G$.  Some observations may be useful to build up
intuition before entering the technical development, which
focuses on the envelope game.

One important point is that the communication requirements captured by
Theorem~\ref{thm:mainlb} do not simply arise from the data movement
implied by the $\gamma$ permutations that contribute to the switching
potential, but rather by the constraint that all those permutations
must be realizable under the same schedule, in the sense
defined in Section~\ref{sec:dag}. In fact, from one run of the
envelope game one can always obtain all other runs, by simply
changing the envelope permutation locally applied at each
node. Then, at any given time during the algorithm, the set of
locations that contain envelopes is the same for all runs, although
how the envelopes are distributed across those locations will differ
from run to run. Similarly, at any given superstep, the set of
(source, destination) pairs of the messages sent in that superstep
is the same for all runs, although the envelope carried by a given
message will generally vary from run to run.

To appreciate the implications of a fixed schedule, let us consider
the problem of permuting $N$ records among $p$ BSP processors, under
an input/output protocol where at most $q_{\mathrm{in}}$ records are
initially held by any processor and at most $q_{\mathrm{out}}$ records
are destined to any processor.  A simple BSP algorithm, executing just
one superstep where each record is directly sent from the source
processor to the destination processor, accomplishes the task with
minimum BSP communication complexity $\max\{q_{\mathrm{in}},q_{\mathrm{out}}\}$, assuming that a record fits
in one message.  In particular, if the I/O protocol is balanced, i.e.,
$q_{\mathrm{in}}=q_{\mathrm{out}}=N/p$, then each permutation can be accomplished with communication
complexity $N/p$.  However, under this algorithm, each permutation
will result in a different set of (source, destination) pairs for the
$N$ messages that carry the records, hence in a different
communication schedule.  On the other hand, a BSP algorithm that
evaluates the DAG corresponding to a sorting or permutation network
(see Section~\ref{sec:permnet} for more details) has a fixed
communication schedule, and the realization of a specific permutation
depends only on the content of the messages.

For the BSP model, we center our analysis around the
following quantity.
\begin{definition}
  Given an algorithm $\B$ that plays the envelope game on a switching
  DAG $G=(V,E)$, the \emph{redistribution potential} at superstep $j$,
  denoted $\eta_j$, is the number of different placements of the
  $N$ envelopes across the $p$ processors at the beginning of the
  $j$-th superstep that are achievable, in different runs, while
  complying with the schedule of $\B$. (The order of the envelopes
  within a processor is irrelevant.)
\end{definition}
  We let $\eta_{K+1}$ denote the number of different
  placements just after the end of the last superstep. The plan to
  establish Theorem~\ref{thm:mainlb} is along the following lines:
\begin{itemize}
\item First, we show that, without loss of generality, we can confine
  our analysis to algorithms where all supersteps have degree at most
  1. This greatly simplifies the subsequent counting arguments.
\item Second, we establish that $\eta_{K+1} \geq
  {\gamma}/{(U!)^{N/U}}$, for any algorithm $\B$ that plays the
  envelope game on DAG $G$ with switching potential $\gamma$, in terms
  of the maximum number $U$ of envelopes held by any processor at the
  end of the game.
\item Third, we establish that $\eta_{K+1} \leq (N/p)^{pH}$,
  due to the structure of the BSP model.
\item Finally, Theorem~\ref{thm:mainlb} stems from a combination of
  the upper and the lower bounds on $\eta_{K+1}$ in the two previous
  points.
\end{itemize}
The first three steps of the plan are each carried out by a separate
lemma. 

\begin{lemma}\label{lem:degree1}
For any BSP algorithm $\A$ with communication schedule independent of
the input, there exists a BSP algorithm $\A'$ with the following
properties.
\begin{itemize}
\item $\A'$ and $\A$ compute the same function.
\item The communication schedule of $\A'$ is independent of the input;
furthermore, in any superstep, each processor sends at most one
message and receives at most one message.
\item $\A'$ has the same BSP communication complexity as $\A$, that is,
$H_{\A'} = H_{\A}$.
\end{itemize}
\end{lemma}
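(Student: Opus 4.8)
The plan is to transform $\A$ into $\A'$ by ``serializing'' the communication within each superstep: a single superstep of degree $d$ in $\A$ will be replaced by a bounded number of supersteps of degree $1$ in $\A'$, in such a way that the same messages get delivered and the total communication complexity is preserved. The key combinatorial fact I would invoke is that the communication pattern of any single superstep can be described by a bipartite multigraph $B$ on the vertex set of processors, where an edge $\{P_i,P_j\}$ (with multiplicity) records a message from $P_i$ to $P_j$; the \emph{degree} $h_j$ of the superstep is exactly the maximum degree of a vertex in $B$ (counting sends and receives together, with each message contributing to the degree of both endpoints, oriented appropriately). By Vizing's theorem for multigraphs — or, more simply, by a direct greedy/Hall-type edge-coloring argument, since it suffices here to edge-color so that each color class is a (partial) matching in the underlying sense that every processor sends at most one and receives at most one message of that color — the edges of $B$ can be partitioned into at most $h_j$ color classes, each a ``partial permutation'' among the processors. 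Concretely, one can model this as a bipartite graph with senders on one side and receivers on the other, each of degree at most $h_j$, and apply König's edge-coloring theorem to decompose it into $h_j$ matchings.

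First I would make precise the reduction of a superstep to its bipartite communication (multi)graph and state the edge-coloring lemma (bipartite, hence $\chi' = \Delta$) that yields a decomposition of superstep $j$ into $h_j$ rounds, each of which is a set of messages in which every processor is the source of at most one message and the destination of at most one message. Second, I would define $\A'$: it executes, in place of the $j$-th superstep of $\A$, a block of $h_j$ consecutive supersteps, the $r$-th of which delivers exactly the messages of the $r$-th color class, while the local computation of the original superstep is all performed in, say, the last superstep of the block (the local work and the data dependencies are unaffected, since in BSP messages sent in a superstep are only visible in the next one, and here we are only reshuffling \emph{when within the block} each already-determined message travels). Because the decomposition depends only on the communication pattern, which by hypothesis is input-independent, the schedule of $\A'$ is input-independent as well, and each of its supersteps has degree at most $1$. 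Third, I would check correctness: $\A'$ computes the same function because every message of $\A$ is delivered by $\A'$ before the (relocated) local computation that consumes it, and no spurious data is introduced. Fourth, the accounting: the communication complexity of the block replacing superstep $j$ is $\sum_{r=1}^{h_j} 1 = h_j$, since each of the $h_j$ rounds has degree $1$ (it has degree $1$ rather than $0$ precisely when that color class is nonempty, and empty classes can be dropped); summing over $j$ gives $H_{\A'} = \sum_j h_j = H_{\A}$.

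The one genuine subtlety — and the step I would flag as the main obstacle — is getting the edge-coloring bound to be \emph{exactly} $h_j$ rather than $h_j + 1$ or $\tfrac{3}{2}h_j$, which is where a naive appeal to Vizing or Shannon for general multigraphs would cost us and break the clean identity $H_{\A'} = H_{\A}$. The resolution is that the relevant graph is bipartite (split each processor into a ``send-port'' and a ``receive-port''; a message from $P_i$ to $P_j$ becomes an edge from the send-port of $P_i$ to the receive-port of $P_j$), the maximum degree of this bipartite multigraph is at most $h_j$ by definition of superstep degree, and König's theorem guarantees a proper edge-coloring of a bipartite multigraph with exactly $\Delta$ colors; each color class, pulled back to processors, is then a partial permutation as required. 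A secondary bookkeeping point is that the number of supersteps of $\A'$ grows (to $\sum_j h_j = H_{\A}$ from $S_{\A}$), but the lemma makes no claim about the number of supersteps, so this is harmless for our purposes; and one should note that a processor idle in round $r$ simply does nothing, which is legal.
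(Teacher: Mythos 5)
Your proposal follows essentially the same route as the paper: the paper also models each superstep's communication as a bipartite (sender/receiver) multigraph of maximum degree $h_j$, regularizes it, and extracts $h_j$ perfect matchings by repeated applications of Hall's theorem --- which is exactly the content of the K\"onig edge-coloring fact you invoke --- then replaces the superstep by $h_j$ degree-one supersteps, giving $H_{\A'}=H_{\A}$. The one detail to fix is the placement of the local computation: you defer it to the \emph{last} superstep of the block, but in BSP a message sent in a superstep may carry a value computed earlier in that same superstep, so the computation (and the receipt of messages from the preceding block) must go in the \emph{first} superstep of the block, as the paper does; with that one-word change your argument is correct.
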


\begin{proof}
We obtain $\A'$ from $\A$ by replacing each superstep ${\cal S}$ of
degree $h \geq 1$ with $h$ supersteps of degree $1$. (Supersteps where
$h=0$ are left unchanged.)  Specifically, let the communication
schedule of ${\cal S}$ be modeled by the bipartite \emph{message
  multigraph} $M = (S,D,F)$, where $S = \{s_1, \ldots, s_p\}$, $D =
\{d_1, \ldots, d_p\}$, and $F \subseteq S \times D$ where edge
multiset $F$ contains $(s_i,d_k)$ with a multiplicity equal to the
number of messages sent by $P_i$ to $P_k$ during ${\cal S}$. Clearly,
each node in $S$ and in $D$ has degree at most $h$. It is a simple
matter to augment $M$ with edges so as to obtain a regular bipartite
multigraph $M' = (S,D,F')$ of degree $h$.
We recall that Hall's theorem~\cite{Hall35} ensures that any regular
bipartite multigraph does admit a perfect matching.  Therefore, by
repeated applications of Hall's theorem, $F'$ can be partitioned into
$h$ perfect matchings, i.e., $F' = \sum_{\nu=1}^{h} F_{\nu}$, where
$M_{\nu} = (S,D,F_{\nu})$ is regular of degree $1$.  Correspondingly,
superstep ${\cal S}$ can be replaced by the equivalent sequence of
supersteps $({\cal S}_1, \ldots, {\cal S}_h)$ where, for
$\nu\in[h]$,\footnote{Throughout the paper,
  $[x]$ denotes the set $\{1,2,\dots,x\}$.}
  superstep ${\cal S}_{\nu}$ includes the messages of ${\cal
  S}$ corresponding to the edges in $F_{\nu} \cap F$.  (No message
corresponds to the edges in $(F'-F)$, which where added just to make
Hall's theorem directly applicable). All computations performed by
${\cal S}$ as well as the receive operations relative to messages sent
by the superstep preceding ${\cal S}$ are assigned to ${\cal S}_1$.
It is then straightforward to verify that the sequence $({\cal S}_1,
\ldots, {\cal S}_h)$ is equivalent to ${\cal S}$ and contributes $h$
to the BSP communication complexity of $\A'$, since each of its $h$
supersteps has degree $1$.
\end{proof}

\begin{lemma}\label{lem:etalb}
Consider any algorithm $\B$ for a BSP with $p$ processors that plays
the envelope game on a switching DAG $G=(V,E)$ in $K$ supersteps.
Let $N$ and $\gamma$ be the switching size and the
switching potential of $G$, respectively. At the end of the
algorithm, if each processor holds at most $U \leq N$ envelopes, the
redistribution potential satisfies
\begin{equation}\label{eqn:dplb}
\eta_{K+1} \geq \frac{\gamma}{(U!)^{N/U}}.
\end{equation}
\end{lemma}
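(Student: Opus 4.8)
The plan is to show that the $\gamma$ permutations realizable by the DAG, all under the \emph{same} communication schedule of $\B$, induce at least $\gamma/(U!)^{N/U}$ distinct final placements of the envelopes across the $p$ processors. The key observation is that a run of the envelope game is completely determined by the local permutation applied at each internal node, and that the \emph{schedule} of $\B$ (which source--destination pairs of messages fire in which supersteps, which arcs are used, which nodes sit on which processors) is invariant across runs; only the \emph{identities} of the envelopes carried along each fixed arc change from run to run. Consequently, at the end of the game, every run produces the same \emph{multiset of occupancy counts} — namely, processor $P(w)$ for an output node $w$ receives exactly $\delta_{\mathrm{in}}(w)$ envelopes, and more generally each processor ends with a fixed number of envelopes, at most $U$ by hypothesis — but \emph{which} envelopes land where can differ.

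First I would set up the map from permutations to final placements. Each run $r$ of the envelope game realizes a permutation $\rho_r$ (contributing to $\gamma$, per Definition~\ref{def:swPotential}) and ends in a final placement $\pi_r$ that records, for each processor, the \emph{set} of envelopes it holds (order within a processor being irrelevant, per the definition of $\eta$). Since the envelopes are in bijection with the $N$ arcs of $E_{\mathrm{in}}$ (equivalently, the first coordinates $1,\dots,N$), and the output-side arcs $E_{\mathrm{out}}$ are partitioned among the processors according to the fixed assignment $w \mapsto P(w)$, knowing $\pi_r$ tells us, for each output processor, \emph{which set of input-indices} terminate on its output arcs — but not the internal pairing between those input arcs and the specific output arcs within that processor. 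So $\rho_r$ determines $\pi_r$, and the fiber over a given placement $\pi$ consists precisely of those permutations $\rho$ that agree with $\pi$ at the coarse (processor) level but differ in how, within each processor holding some set $B_k$ of output arcs fed by some set $A_k$ of input indices, the bijection $A_k \to B_k$ is chosen.

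Second I would bound the fiber size. For a fixed placement $\pi$, let the processors partition the $N$ output arcs into blocks of sizes $u_1, u_2, \dots$ with each $u_k \le U$ and $\sum_k u_k = N$. Two permutations $\rho, \rho'$ give the same $\pi$ only if they agree except possibly on the internal bijection inside each block, so the number of permutations mapping to $\pi$ is at most $\prod_k u_k! \le \prod_k U!^{\,u_k/U}$ — using the elementary inequality $u! \le (U!)^{u/U}$ valid for $0 \le u \le U$ (this follows from log-convexity of $\log u!$, or can be checked directly). Hence every fiber has size at most $(U!)^{\sum_k u_k/U} = (U!)^{N/U}$. Since the $\gamma$ permutations are distributed among the $\eta_{K+1}$ final placements with each fiber of size at most $(U!)^{N/U}$, we get $\gamma \le \eta_{K+1} \cdot (U!)^{N/U}$, which rearranges to the claimed inequality~\eqref{eqn:dplb}.

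The main obstacle, and the point deserving the most care, is justifying rigorously that two runs yielding the same \emph{processor-level} placement of envelopes can differ \emph{only} in the within-processor output bijections — i.e., that the coarse placement $\pi$ together with the within-block bijections recovers $\rho$ uniquely. This needs the fact that the schedule is rigid: the set of (source, destination) pairs in each superstep, and the arc along which each message travels, is the same in all runs (as argued in the intuition preceding the lemma), so the only freedom is the content of the messages; and an envelope that ends on a particular output arc $\ell \in E_{\mathrm{out}}$ pins down $\rho(j) = \ell$ where $j$ is that envelope's original input arc. The inequality $u! \le (U!)^{u/U}$ is routine. One should also handle the mild edge case $U \nmid N$ (so $N/U$ is not an integer): the bound $(U!)^{N/U}$ still dominates $\prod_k u_k!$ since $\sum_k (u_k/U) = N/U$ exactly, so no rounding is needed in the exponent.
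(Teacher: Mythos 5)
Your proposal is correct and follows essentially the same approach as the paper's proof: both count the permutations collapsing to a single final placement by the product of factorials of the per-processor envelope counts, and both bound that product by $(U!)^{N/U}$ using the monotonicity of $\log(x!)/x$. The only (harmless) difference is that you apply the inequality $u!\leq (U!)^{u/U}$ termwise, which lets you skip the paper's intermediate convexity step identifying the extremal configuration of the $U_i$'s.
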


\begin{proof}
Let $U_i \leq U$ denote the number of envelopes held by processor
$P_i$ just after the end of the last superstep of algorithm $\B$. We
know that, when varying the input-output correspondence for each
internal node of $G$ in all possible ways, $\gamma$ different
permutations of the envelopes over the arcs entering the outputs of
$G$ are generated.  At most $\Pi_{i=1}^{p}(U_i!)$ envelope
permutations can differ only by a rearrangement of the envelopes among
arcs assigned to the same processor and thus result in the same
placement of the envelopes across processors.  Hence,
\begin{equation}\label{eqn:dplb1}
\eta_{K+1} \geq \frac{\gamma}{\prod_{i=1}^{p}(U_i!)}.
\end{equation}
Considering that the quantity $\Pi_{i=1}^{p}(U_i!)$ is a superlinear
function of the $U_i$'s and that $\Sigma_{i=1}^{p}U_i = N$, a
convexity argument reveals that the maximum value is reached when
$\lfloor N/U \rfloor$ of the variables are set to the value $U$, one
variable is set to $(N \bmod U)$, and the remaining variables are set
to zero. Therefore,
\[
\prod_{i=1}^{p}(U_i!) \leq (U!)^{\lfloor N/U \rfloor} (N \bmod U)! \leq (U!)^{\lfloor N/U \rfloor} (U!)^{(N \bmod U)/U} = (U!)^{N/U},
\]
where the second inequality follows since the function $f(x) = \log
(x!)/x$ is increasing when $x>0$, and hence $f(N \bmod U) \leq
f(U)$. Then, by plugging inequality $\Pi_{i=1}^{p}(U_i!) \leq
(U!)^{N/U}$ in~\eqref{eqn:dplb1}, we obtain bound~\eqref{eqn:dplb}.
\end{proof}

\begin{lemma}\label{lem:etaub}
Consider any algorithm $\B$ for a BSP with $p$ processors that plays
the envelope game on a switching DAG $G=(V,E)$, with communication
complexity $H_{\B}$. Let $N$ be the switching size of $G$. At the end
of the algorithm, the redistribution potential satisfies
\begin{align*}\label{eqn:dpub}
\eta_{K+1} \leq \left\{
 \begin{array}{ll}
 (N/p)^{p H_{\B}}& \,\text{if $p \leq N/e$,}\\
 e^{N H_{\B}/e}& \,\text{otherwise.}
 \end{array}
 \right.
\end{align*}
\end{lemma}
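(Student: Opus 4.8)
The plan is to bound the number of distinct end-of-game placements by counting how much ``freedom'' the schedule leaves to distribute envelopes, tracking this quantity superstep by superstep. By Lemma~\ref{lem:degree1} I may assume without loss of generality that every superstep has degree at most $1$: each processor sends at most one envelope and receives at most one envelope per superstep. (Strictly, Lemma~\ref{lem:degree1} is stated for algorithms with input-independent communication schedule, which is exactly the case for an algorithm playing the envelope game, since the set of (source, destination) pairs at each superstep is fixed and only the carried envelope varies from run to run.) After this reduction the number of supersteps with a nonzero message equals $H_{\B}$, since each such superstep contributes exactly $1$.

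The key step is to define, for each superstep $j$, the redistribution potential $\eta_j$ as in the excerpt, and to prove the recurrence $\eta_{j+1} \le c \cdot \eta_j$ for a suitable factor $c$ depending only on the superstep's message pattern. Fix the run up to the beginning of superstep $j$; the placement at that point is one of $\eta_j$ possibilities. Within superstep $j$, the only thing that varies across runs is: for each internal node $u$ that is ``resolved'' during this superstep (i.e., whose full complement of incoming envelopes has arrived), the local permutation chosen at $u$; this determines which envelope departs along each outgoing arc, and hence which envelope (if any) is the one message a processor sends this superstep. But I want to bound the number of placements at the start of superstep $j+1$ in terms of placements at the start of superstep $j$. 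The cleanest way: a placement at the start of $j+1$ is determined by (a) the placement at the start of $j$, together with (b) for each of the (at most $p$) processors that sends a message this superstep, the choice of which of its currently-held envelopes it sends. If processor $P_i$ holds $a_i$ envelopes at the start of superstep $j$, it has at most $a_i$ choices for the envelope it forwards. Hence $\eta_{j+1} \le \eta_j \cdot \prod_{i \in S_j} a_i$, where $S_j$ is the set of at most $p$ senders. Since $\sum_i a_i = N$ and there are at most $p$ factors each at most... here I apply the AM--GM / convexity bound: $\prod_{i \in S_j} a_i \le (N/|S_j|)^{|S_j|}$, and the function $x \mapsto (N/x)^x$ is increasing on $(0, N/e]$ and maximized at $x = N/e$ with value $e^{N/e}$. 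Thus each superstep with a message multiplies $\eta$ by at most $\max_{1 \le x \le p}(N/x)^x$, which equals $(N/p)^p$ when $p \le N/e$ and $e^{N/e}$ when $p > N/e$.

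Chaining the recurrence over the $H_{\B}$ supersteps that carry a message (the degree-$0$ supersteps leave $\eta$ unchanged), and noting $\eta_1 = 1$ since the initial placement is fixed by rule~(b) of ``playing the game,'' gives $\eta_{K+1} \le \left(\max_{1 \le x \le p}(N/x)^x\right)^{H_{\B}}$, which is exactly the claimed two-case bound.

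\medskip

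The main obstacle I anticipate is item (b) above: making rigorous the claim that the start-of-superstep-$(j+1)$ placement is a function of the start-of-superstep-$j$ placement plus the per-sender choice of forwarded envelope. One must check that ``which internal-node permutations are consistent with the schedule'' interacts correctly with the degree-$1$ reduction — in particular, that after splitting a high-degree superstep into degree-$1$ pieces, the envelope that a processor forwards in a given degree-$1$ piece is indeed one it currently holds (rule~5 of the envelope game, together with the construction in Lemma~\ref{lem:game}, ensures a node is only resolved after all its incoming envelopes are present, so the forwarded envelope is available locally). A secondary subtlety is that a processor might both send and receive in the same (degree-$1$) superstep; but the received envelope is determined by the sending processor's choice, already counted, so this does not add an independent factor. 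I expect the write-up to handle these points by carefully stating that a run is fully specified by the tuple of local permutations, and that the placement at each superstep boundary is a deterministic function of the prefix of that tuple consumed so far.
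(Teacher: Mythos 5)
Your proposal is correct and follows essentially the same route as the paper's proof: reduce to degree-one supersteps via Lemma~\ref{lem:degree1}, bound $\eta_{j+1}/\eta_j$ by the product of the envelope counts of the sending processors, maximize that product via convexity/AM--GM to get $\max_{1\le x\le p}(N/x)^x$, and chain over the $H_{\B}$ message-carrying supersteps starting from $\eta_1=1$. The only cosmetic difference is that you take the product over the set of actual senders while the paper takes it over all processors holding at least one envelope; both yield the same bound.
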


\begin{proof}
For any $i \in [p]$ and $j \in [K+1]$, where $K$ is the number of
supersteps in $\B$, we denote with $t_{i,j}$ the number of envelopes
held by processor $P_i$ at the beginning of the $j$-th superstep.
Clearly, $\sum_{i=1}^{p} t_{i,j} = N$ for every $j$, since, by Rule 2
of the envelope game (Definition~\ref{def:envelopeGame}), the number
of envelopes is invariant and always equal to $N$.  (This constraint
would not necessarily hold if recomputation were allowed.)
Let, for every $j \in [K]$, $\mathcal{P}_j'$ to be the set of processors
holding at least one envelope at the beginning of the $j$-th superstep,
that is, $\mathcal{P}_j' = \{P_i : t_{i,j} \geq 1\}$. We claim that,
for every $j \in [K]$,
\begin{equation}\label{eqn:etas}
\eta_{j+1}/\eta_{j} \leq \prod_{i \in [p] : t_{i,j} \geq 1} t_{i,j}.
\end{equation}
Thanks to Lemma~\ref{lem:degree1}, we can assume without loss of
generality that all supersteps of $\B$ have degree at most one.
Consider superstep $j$, and consider a processor $P_i \in
\mathcal{P}_j'$.  At the beginning of the $j$-th superstep, $P_i$
holds $t_{i,j} \geq 1$ envelopes. From these, $P_i$ can choose, in
exactly $t_{i,j} \geq 1$ different ways, the envelope to send in the
$j$-th superstep, if any.  The claim then follows because any of the
$\eta_{j+1}$ envelope placements immediately after superstep $j$
correspond to one or more combinations of (a) one of the $\eta_j$
placements achievable immediately before superstep $j$, and (b) one
communication choice for each processor. Let $p_j' =
|\mathcal{P}_j'|$. Given that, for every $j \in [K]$, $\sum_{i \in [p]
  : t_{i,j} \geq 1} t_{i,j} = N$, the right-hand side
of~\eqref{eqn:etas} is maximized when all the $p_j'$ factors equal
$N/p_j'$, and hence from~\eqref{eqn:etas} we obtain
\[
\eta_{j+1}/\eta_{j} \leq (N/p_j')^{p_j'}. 
\]
Standard calculus reveals that the function $(N/p_j')^{p_j'}$ has
its maximum at $p_j' = N/e$. Therefore, since we must have
$p_j' \leq p$, we have
\begin{align*}
\eta_{j+1}/\eta_{j} \leq \left\{
 \begin{array}{ll}
 (N/p)^p& \,\text{if $p \leq N/e$,}\\
 e^{N/e}& \,\text{otherwise.}
 \end{array}
 \right.
\end{align*}
Multiplying both sides of the preceding relation over the $H_{\B}$
supersteps of degree one; considering that, for a superstep with
$h=0$, $\eta_{j+1}/\eta_{j} =1$; and observing that $\eta_1=1$, since
the only placement of envelopes among processors before the first
superstep is the one corresponding to the input placement protocol,
the claim follows.
\end{proof}

We are now ready to prove Theorem~\ref{thm:mainlb}, which we recall for convenience.

\mainlb*

\begin{proof}
When $p \leq N/e$, combining Lemma~\ref{lem:etalb} and Lemma~\ref{lem:etaub} yields
\[
(N/p)^{pH_{\B}} \geq  \frac{\gamma}{(U!)^{N/U}}.
\]
Taking the logarithm of both sides, solving for $H_{\B}$, and
recalling that, by Lemma~\ref{lem:game}, $H_{\A} \geq H_{\B}/\Delta$,
we conclude that
\[
H_{\A} \geq  \frac{\log (\gamma / (U!)^{N/U})}{\Delta p \log(N/p)}.
\]

The case $p > N/e$ is shown analogously.
\end{proof}

\section{The Cyclic Shift Technique}\label{sec:cyclicshift}

As discussed in Section~\ref{sec:ourresults}, the switching potential
lower bound becomes weaker as the maximum number $U$ of output nodes
held by a processor grows. In fact, the larger is $U$, the larger is
the number of permutations that can be realized without interprocessor
communication.  However, there are classes of permutations that, in
spite of their small cardinality, do require high communication even
for large values of $U$.  One such class is that of the
\emph{cyclic shifts} of order $n$, i.e., the permutations $\sigma_0,
\sigma_1, \ldots, \sigma_{n-1}$ such that, for $0 \leq k,i \leq n-1$,
we have $\sigma_k(i)=(i+k) \bmod n$. Intuitively, $\sigma_k$
cyclically shifts to the right, by $k$ positions. We say that a DAG
$G$ can realize all cyclic shifts of order $n$ if 1)
$|V_{\mathrm{in}}|=|V_{\mathrm{out}}|=n$; 2) there exist labelings
of the input nodes, $V_{\mathrm{in}} = \{v_0, v_1, \ldots,
v_{n-1}\}$, and of the output nodes, $V_{\mathrm{out}} = \{v'_0,
v'_1, \ldots, v'_{n-1}\}$, such that, for any $0 \leq k < n$, there
exists a set of $n$ arc-disjoint paths connecting $v_i\in
V_{\mathrm{in}}$ to $v'_{\sigma_k(i)}\in V_{\mathrm{out}}$, for any
$0 \leq i < n$. Several interesting computational DAGs do realize
all cyclic shifts of a given order. We now quantify the
communication required by algorithms that evaluate such DAGs.

We need to introduce the following notation. Let $P_1$ be one BSP
processor, and let $P_0$ be a virtual processor consisting of the other
$p-1$ processors. Denote by $I_1$ and $I_0$ the set of input nodes
initially held by $P_1$ and $P_0$, respectively. We also denote by
$O_1$ and $O_0$ the set of output nodes evaluated by $P_1$ and
$P_0$, respectively. 

\begin{lemma}\label{lem:cycliclb}
Let $G=(V,E)$ be a DAG with $n$ input nodes and $n$ output nodes,
capable of realizing all the $n$ \emph{cyclic shifts}, with respect to
some fixed labeling of inputs and outputs. Let $\A$ be an algorithm
that evaluates $G$ (possibly, \emph{with recomputation}) on a BSP with
$p \geq 2$ processors, such that initially each input is available to
exactly one processor.  Then the BSP communication complexity of $\A$
satisfies
\[
H_{\A} \geq \frac{|I_0||O_1|+|I_1||O_0|}{2n}.
\]
\end{lemma}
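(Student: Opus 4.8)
The plan is to use an information-theoretic / counting argument based on the fact that $G$ realizes all $n$ cyclic shifts under a \emph{fixed} correspondence between inputs and outputs, combined with the observation that input values can be set adversarially. First I would fix the labelings $V_{\mathrm{in}}=\{v_0,\dots,v_{n-1}\}$ and $V_{\mathrm{out}}=\{v'_0,\dots,v'_{n-1}\}$ guaranteed by the hypothesis, and consider the $n$ executions of $\A$, one for each cyclic shift $\sigma_k$; note that although $\A$ may recompute, on the input producing $\sigma_k$ the value that ends up at output $v'_{\sigma_k(i)}$ equals the value injected at input $v_i$. The key quantity I would track is, for the pair $(P_0,P_1)$, the number of input values that must cross the cut between $P_0$ and $P_1$ over the course of the computation: for a given shift $\sigma_k$, an input $v_i\in I_0$ whose value is destined (under $\sigma_k$) to an output in $O_1$ contributes a ``crossing requirement,'' and symmetrically for $v_i\in I_1$ destined to $O_0$. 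I would make this precise by an indistinguishability argument: if the total number of values communicated across the cut were too small, then for two different shifts $\sigma_k,\sigma_{k'}$ the transcript of messages crossing the cut (together with $P_1$'s local state) would be identical, forcing $\A$ to produce the same outputs on inputs that demand different outputs — a contradiction once the injected values are chosen distinct.

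The combinatorial heart is to average over the $n$ shifts. For a fixed input position $i\in I_0$, as $k$ ranges over $\{0,\dots,n-1\}$ the target $\sigma_k(i)=(i+k)\bmod n$ ranges over \emph{all} $n$ output positions exactly once, so the number of shifts $k$ for which $v_i$'s value lands in $O_1$ is exactly $|O_1|$. Summing over $i\in I_0$, the total count $\sum_{k}\#\{i\in I_0: \sigma_k(i)\in O_1\} = |I_0|\,|O_1|$, and symmetrically $\sum_k \#\{i\in I_1:\sigma_k(i)\in O_0\} = |I_1|\,|O_0|$. Hence the \emph{average} over the $n$ shifts of the number of input values that must cross the $(P_0,P_1)$ cut is $(|I_0||O_1|+|I_1||O_0|)/n$, so some shift $\sigma_{k^*}$ requires at least this many values to cross. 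Each message carries a constant number of words, so at least $(|I_0||O_1|+|I_1||O_0|)/n$ messages cross the cut in the execution for $\sigma_{k^*}$; these messages are all either sent by $P_1$ or received by $P_1$, so spread over the supersteps they contribute at least half that amount, $(|I_0||O_1|+|I_1||O_0|)/(2n)$, to $H_{\A}$ (the factor $2$ absorbing the possibility that in a single superstep $P_1$ both sends and receives). This yields the claimed bound.

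The step I expect to be the main obstacle is making the ``crossing requirement'' argument rigorous in the presence of recomputation and arbitrary routing through intermediate processors. One must argue that, for a fixed shift, the \emph{distinct injected values} at inputs in $I_0$ whose targets lie in $O_1$ genuinely cannot all be reconstructed inside $P_1$ without the corresponding information crossing the cut — this requires choosing the input values from a large enough domain (or symbolically, as formal indeterminates) so that the function computed at each output node is sensitive to its designated input, and then invoking a standard cut/indistinguishability lemma: $P_1$'s sequence of received messages across the whole computation, plus its initial local contents ($I_1$), determine all of $P_1$'s outputs, so if fewer than the required number of $I_0$-values' worth of information crosses, two shifts collide. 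Care is also needed because the virtual processor $P_0$ lumps together $p-1$ real processors, so ``crossing the cut'' must be defined as a message whose sender and receiver lie on opposite sides; internal $P_0$-to-$P_0$ traffic is irrelevant. Once this indistinguishability backbone is in place, the averaging computation above is routine, and Theorem~\ref{thm:cycliclb} follows by choosing $P_1$ to be the processor witnessing $q = |I_1|$ inputs (or $q=|O_1|$ outputs) and bounding the remaining terms.
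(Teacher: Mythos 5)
Your proposal follows essentially the same route as the paper's proof, which is an averaging argument patterned after Vuillemin: summing over the $n$ cyclic shifts shows that $|I_0||O_1|+|I_1||O_0|$ input--output pairs straddle the $(P_0,P_1)$ cut, so some shift forces at least $(|I_0||O_1|+|I_1||O_0|)/n$ values across, and the factor $1/2$ comes from each crossing message being either sent or received by $P_1$. The only difference is that the paper states the ``each crossing value costs a message'' step tersely, relying on the hypothesis that each input is initially available to exactly one processor, whereas you elaborate the indistinguishability justification you flag as the main obstacle.
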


\begin{proof}
We use an argument patterned after Vuillemin~\cite{Vuillemin83}. There are $|O_1|$
cyclic shifts that match an input node in $I_0$ with every output node in
$O_1$; similarly, there are $|O_0|$ cyclic shifts that match an input node
in $I_1$ with every output node in $O_0$. Summing over all cyclic shifts,
we get that $|I_0||O_1|+|I_1||O_0|$ input nodes are assigned to output
nodes contained in a different processor. As there are $n$ cyclic shifts,
there must be a shift that matches $F =\lceil (|I_0||O_1|+|I_1||O_0|)/n\rceil$ input
nodes to output nodes contained in a different processor. Since at the
beginning of the computation each input is available to exactly one processor,
a total of $F$ messages are exchanged between $P_0$ and $P_1$. Therefore, $P_1$
receives or sends at least $F/2$ messages, and hence $H_{\A} \geq F/2$.
\end{proof}

As an immediate consequence of this lemma, we obtain the following.

\cycliclb*

\begin{proof}
Since, by hypothesis, each input is initially available to exactly one processor,
we have $|I_0|+|I_1| = n$. Moreover, since we can assume without loss of
generality that each output node is computed only once, $|O_0|+|O_1| = n$.
Thus, if we let $F$ denote the quantity $(|I_0||O_1|+|I_1||O_0|)/n$,
\[
F = \frac{(n-|I_1|)|O_1|+|I_1|(n-|O_1|)}{n} \geq \min\{|O_1|, n-|O_1|\}.
\]
Therefore, if $|O_1|=q$, then $F \geq \min\{q, n-q\}$. A symmetric
argument yields the same bound if $|I_1|=q$.  In conclusion, if
$|I_1|=q$ or $|O_1|=q$, by applying Lemma~\ref{lem:cycliclb}
we obtain $H_{\A} \geq F/2 \geq \min\{q, n-q\}/2$, as desired.
\end{proof}

\section{Applications}\label{sec:applications}
In this section, we show the versatility of the switching potential
technique by applying it to the FFT DAG and to sorting and permutation
networks.  
Further, we show that by applying the switching potential technique to
some parts of a particular sorting network it is possible to obtain a
lower bound stronger than the one obtained by applying the switching
potential technique to the entire DAG.

\subsection{Fast Fourier Transform}
Let $n$ be a power of two.
In the $n$-input FFT DAG, a node is a pair $\langle w,l \rangle$, with
$0 \leq w < n$ and $0 \leq l \leq \log n$, and there exists an arc
from node $\langle w,l \rangle$ to node $\langle w',l' \rangle$ if
and only if $l' = l+1$ and either $w$ and $w'$ are identical or their
binary representations differ exactly in the $l'$-th least significant bit.
See Figure~\ref{fig:FFT} for an example.

\begin{figure}[h]
\begin{center}
\resizebox{0.6\textwidth}{!}{
\begin{tikzpicture}[every node/.style={circle,draw},>=stealth,rotate=90]
\node (a1) at (0,7) {};
\node (a2) at (0,6) {};
\node (a3) at (0,5) {};
\node (a4) at (0,4) {};
\node (a5) at (0,3) {};
\node (a6) at (0,2) {};
\node (a7) at (0,1) {};
\node (a8) at (0,0) {};
\node (b1) at (1.5,7) {};
\node (b2) at (1.5,6) {};
\node (b3) at (1.5,5) {};
\node (b4) at (1.5,4) {};
\node (b5) at (1.5,3) {};
\node (b6) at (1.5,2) {};
\node (b7) at (1.5,1) {};
\node (b8) at (1.5,0) {};
\node (c1) at (3.0,7) {};
\node (c2) at (3.0,6) {};
\node (c3) at (3.0,5) {};
\node (c4) at (3.0,4) {};
\node (c5) at (3.0,3) {};
\node (c6) at (3.0,2) {};
\node (c7) at (3.0,1) {};
\node (c8) at (3.0,0) {};
\node (d1) at (4.5,7) {};
\node (d2) at (4.5,6) {};
\node (d3) at (4.5,5) {};
\node (d4) at (4.5,4) {};
\node (d5) at (4.5,3) {};
\node (d6) at (4.5,2) {};
\node (d7) at (4.5,1) {};
\node (d8) at (4.5,0) {};

\draw[->] (a1) edge (b1);
\draw[->] (a2) edge (b2);
\draw[->] (a1) edge (b2);
\draw[->] (a2) edge (b1);
\draw[->] (a3) edge (b3);
\draw[->] (a4) edge (b4);
\draw[->] (a3) edge (b4);
\draw[->] (a4) edge (b3);
\draw[->] (a5) edge (b5);
\draw[->] (a6) edge (b6);
\draw[->] (a5) edge (b6);
\draw[->] (a6) edge (b5);
\draw[->] (a7) edge (b7);
\draw[->] (a8) edge (b8);
\draw[->] (a7) edge (b8);
\draw[->] (a8) edge (b7);
\draw[->] (b1) edge (c1);
\draw[->] (b2) edge (c4);
\draw[->] (b1) edge (c3);
\draw[->] (b2) edge (c2);
\draw[->] (b3) edge (c3);
\draw[->] (b4) edge (c4);
\draw[->] (b3) edge (c1);
\draw[->] (b4) edge (c2);
\draw[->] (b5) edge (c5);
\draw[->] (b6) edge (c8);
\draw[->] (b5) edge (c7);
\draw[->] (b6) edge (c6);
\draw[->] (b7) edge (c7);
\draw[->] (b8) edge (c8);
\draw[->] (b7) edge (c5);
\draw[->] (b8) edge (c6);
\draw[->] (c1) edge (d1);
\draw[->] (c2) edge (d2);
\draw[->] (c1) edge (d5);
\draw[->] (c2) edge (d6);
\draw[->] (c3) edge (d3);
\draw[->] (c4) edge (d4);
\draw[->] (c3) edge (d7);
\draw[->] (c4) edge (d8);
\draw[->] (c5) edge (d5);
\draw[->] (c6) edge (d6);
\draw[->] (c5) edge (d1);
\draw[->] (c6) edge (d2);
\draw[->] (c7) edge (d7);
\draw[->] (c8) edge (d8);
\draw[->] (c7) edge (d3);
\draw[->] (c8) edge (d4);
\end{tikzpicture}}
\caption{The FFT DAG on $n=8$ inputs and switching size $N=16$. The
  nodes $\langle w,l \rangle$ are placed so that $w=0,\ldots,7$ from left
  to right and $l=0, \ldots, 3$ from bottom to top.}\label{fig:FFT}
\end{center}
\end{figure}
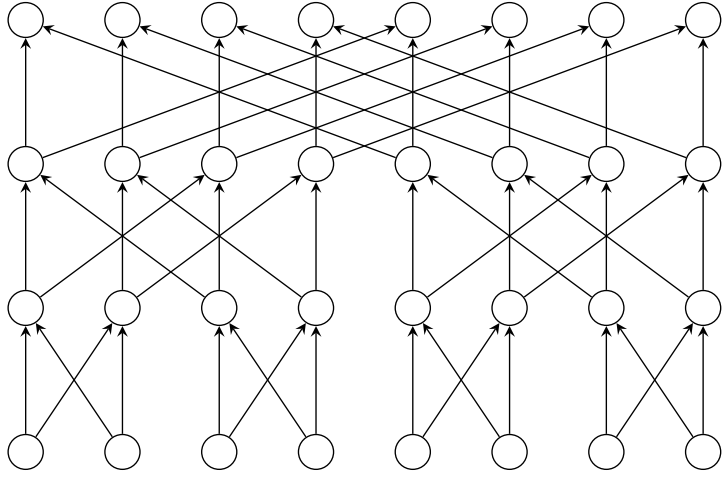

The $n$-input FFT DAG is a switching DAG since for any internal node $v$
we have $\delta_{\mathrm{in}}(v) = \delta_{\mathrm{out}}(v) = 2$,
and its switching size is $N = 2n$. Its switching potential is
established by the following lemma.

\begin{lemma}\label{lem:FFTsp} 
The FFT DAG of input size $n$ has switching potential $\gamma = 2^{n(\log n - 1)}$.
\end{lemma}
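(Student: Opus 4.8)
The plan is to compute the switching potential $\gamma$ of the $n$-input FFT DAG by exploiting its recursive "butterfly" structure, together with the upper bound $\gamma \leq \prod_{v \in V \setminus V_{\mathrm{out}}} \delta_{\mathrm{out}}(v)$ already recorded in the excerpt. For the FFT DAG, every node that is not an output has out-degree exactly $2$, and the number of such nodes is $n \log n$ (there are $n$ nodes at each of the $\log n$ internal "levels" $l = 0, 1, \ldots, \log n - 1$). Hence the upper bound reads $\gamma \leq 2^{n \log n}$, which is a factor $2^n$ above the claimed value $2^{n(\log n - 1)}$. The overcount arises, as the excerpt notes, precisely because distinct partitions of $E$ into arc-disjoint paths can induce the same permutation $\rho$; so the real work is to pin down exactly how much collapsing occurs.

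First I would set up coordinates: view the FFT DAG as $\log n$ "columns" of $n/2$ butterflies each, where the $l$-th column ($l = 0, \ldots, \log n - 1$) pairs up nodes whose indices $w, w'$ differ in the $l'$-th least significant bit, $l' = l + 1$. A choice of arc-disjoint path partition is exactly a choice, at each of the $n \log n$ internal nodes, of one of the two ways to connect its two incoming arcs to its two outgoing arcs ("straight" or "crossed"); call this the \emph{switch setting}. There are $2^{n \log n}$ switch settings, matching the crude upper bound. The key structural observation is a local symmetry at each butterfly: a butterfly is a $2$-input, $2$-output gadget made of two input nodes and two output nodes (internal to the DAG), and flipping \emph{both} switches on the input side of a butterfly while simultaneously flipping \emph{both} switches on its output side leaves the induced input-to-output correspondence of that butterfly unchanged. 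More carefully, I expect to show that the map from switch settings to permutations $\rho$ is exactly $2^n$-to-$1$: there is a group of $2^n$ "gauge transformations" acting freely on switch settings and preserving $\rho$. A clean way to see the count is to argue that the permutation realized is determined, level by level, by which output arc each token exits from, and that at the first level the two outgoing arcs of each of the $n/2$ first-level butterflies lead to the \emph{same} pair of second-level nodes, so swapping the two tokens entering a first-level butterfly can be compensated downstream — this is what kills one factor of $2$ per... and then I would need to track the bookkeeping carefully to land on exactly $2^n$ rather than, say, $2^{n/2}$ or $2^{n\log n}$.

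Concretely, I would prefer the following cleaner route. Fix the switch settings at levels $l = 1, \ldots, \log n - 1$ arbitrarily (there are $2^{n(\log n - 1)}$ such choices); I claim each such choice, together with \emph{some} completion at level $0$, yields a valid path partition, and that \emph{every} resulting permutation $\rho$ arises from exactly one choice of the level-$\geq 1$ settings — i.e., the level-$0$ settings are "free gauge." To see the latter: at level $0$, each of the $n/2$ butterflies has its two input nodes being actual FFT inputs; the two outgoing arcs of each level-$0$ butterfly go to a fixed pair $\{a, b\}$ of level-$1$ nodes, and crucially those two level-$1$ nodes sit in the \emph{same} level-$1$ butterfly — no wait, they don't in general; I need the routing argument to be: given any target permutation realizable with some switch setting, and given the level-$\geq 1$ part of that setting, the level-$0$ part is uniquely forced, while conversely all $2^{n/2}$... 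Here I see the count wants to be $2^{n/2}$ per... Let me instead just assert the intended statement: the fiber of the map (switch settings $\to$ permutations) has constant size $2^n$, so $\gamma = 2^{n\log n}/2^n = 2^{n(\log n - 1)}$, and the matching \emph{lower} bound on $\gamma$ follows by exhibiting, via the standard stage-decomposition of the FFT into $\lceil \log n / \log s \rceil$ stages of independent sub-FFTs connected by fixed permutations, that at least $2^{n(\log n -1)}$ distinct permutations are actually realized (for instance, each level $l \geq 1$ contributes an independent free bit per node once a canonical convention removes the level-$0$ redundancy).

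The main obstacle, and where I would spend the bulk of the effort, is making the "$2^n$-to-$1$" claim rigorous: identifying the exact gauge group acting on switch settings that fixes $\rho$, proving the action is free, and proving there are no \emph{further} coincidences (two switch settings not related by the gauge giving the same $\rho$). I expect this is cleanest by induction on $\log n$ using the recursive decomposition of the FFT DAG into two half-size FFTs feeding a final level of $n/2$ butterflies (or, dually, a first level feeding two half-size FFTs): one shows $\gamma(n) = 2^{n/2} \cdot \gamma(n/2)^2 / (\text{correction})$ or a similar recurrence, then checks it telescopes to $2^{n(\log n - 1)}$ with base case $n = 1$ giving $\gamma = 1 = 2^{1 \cdot 0}$ and $n = 2$ giving $\gamma = 1 = 2^{2 \cdot 0}$. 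A sanity check: for $n = 4$ the formula predicts $\gamma = 2^{4} = 16$, and the crude bound gives $2^{4 \cdot 2} = 256$, a fiber size of $16 = 2^4 = 2^n$, consistent; and the figures in the excerpt for $n = 4$ display two distinct realizable permutations, consistent with $\gamma = 16 > 1$. The routine calculation I would not grind through is the explicit verification that the two displayed $n=4$ path sets give the stated $\rho$'s and the full enumeration; the conceptual content is the fiber-size computation.
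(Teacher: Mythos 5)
Your proposal is built on a premise that is false for this DAG, and the miscount that leads to it is the source of all the difficulties you run into. A partition of $E$ into arc-disjoint paths is determined by choosing, at each \emph{internal} node, one of the two bijections between its two incoming and its two outgoing arcs; the input nodes (level $l=0$) have in-degree $0$, so no choice is made there---their two outgoing arcs are simply the starting arcs of two distinct paths. The FFT DAG therefore has $n(\log n-1)$ switches (levels $1$ through $\log n-1$), not $n\log n$, and the number of path partitions is exactly $2^{n(\log n-1)}$---already the claimed value of $\gamma$, with no quotient to take. Consequently the ``$2^n$-to-$1$ gauge symmetry'' you are trying to exhibit does not exist: the map from partitions to permutations $\rho$ is injective. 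This follows from the fact that in the FFT (Omega) network there is a \emph{unique} path between any input node and any output node: once you know that the path beginning at arc $j\in E_{\mathrm{in}}$ ends at arc $\rho(j)\in E_{\mathrm{out}}$, the entire path is forced, so $\rho$ determines the partition. This unique-path property is the one ingredient your write-up never invokes, and it is exactly what the paper's proof rests on. (You can also check directly that your proposed local symmetry fails: flipping a switch reroutes a token through the other level-$(l{+}1)$ node of its butterfly, whose outgoing arcs lead to a disjoint pair of level-$(l{+}2)$ nodes, so by uniqueness of paths the token must reach a different output.)

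The correct argument is therefore much shorter than what you are attempting: there are $2^{n(\log n-1)}$ configurations of the internal nodes, each yielding a distinct partition, and distinct partitions yield distinct permutations by the unique-path property; hence $\gamma=2^{n(\log n-1)}$. Your own sanity check at $n=4$ should have flagged the problem: there is only one internal level, hence $16$ partitions each realizing a different permutation---there is no $256$-element set of switch settings collapsing $16$-to-$1$. As written the proposal is also incomplete by your own account (the fiber-size claim is left as an assertion), so even apart from the miscount it would not constitute a proof.
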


\begin{proof}
For each internal node, there exist two possible one-to-one relations
between the incoming arcs and the outgoing arcs.  A configuration of
the internal nodes is given by specifying the relation of each
internal node, and each configuration automatically defines a
particular set of $N = 2n$ arc-disjoint paths.  Since there are
$n(\log n-1)$ internal nodes in the FFT DAG, there are $2^{n(\log n -
  1)}$ possible configurations of the internal nodes.  No two
configurations define the same set: this follows as a corollary of the
property that in the FFT DAG there is a unique path between any input
node and any output node~\cite{Lawrie75}.\footnote{\cite{Lawrie75}
  discusses the property for the Omega network, which is isomorphic to
  the FFT network.}
\end{proof}

Now we show that the FFT DAG can realize all cyclic shifts. This will enable
the application of the results of Section~\ref{sec:cyclicshift}.

\begin{lemma}\label{lem:FFTshift}
The $n$-input FFT DAG can realize all cyclic shifts of order $n$.
\end{lemma}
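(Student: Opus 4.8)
The plan is to exhibit, for each shift amount $k$ with $0 \le k < n$, an explicit set of $n$ arc-disjoint paths in the $n$-input FFT DAG that routes input node $v_i$ to output node $v'_{(i+k) \bmod n}$, using the labeling in which $v_i$ corresponds to the level-$0$ node $\langle i, 0\rangle$ and $v'_j$ to the level-$\log n$ node $\langle j, \log n\rangle$. The key structural fact I would invoke is the one already cited in the proof of Lemma~\ref{lem:FFTsp}: between any input node and any output node of the FFT DAG there is a \emph{unique} path, and moreover each internal node has in-degree and out-degree exactly $2$. Consequently, \emph{specifying a set of $n$ input-to-output paths is equivalent to specifying a permutation $\pi$ such that $v_i \leadsto v'_{\pi(i)}$}, and such a family of paths is automatically arc-disjoint \emph{if and only if} it uses each internal node at most once, i.e.\ exactly once (by a counting argument: $n$ paths, each of length $\log n$, passing through $n(\log n - 1)$ internal nodes $+\,n$ inputs $+\,n$ outputs, wait --- each path has $\log n$ edges hence $\log n - 1$ internal nodes, giving $n(\log n-1)$ internal-node visits total, exactly the number of internal nodes). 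So the task reduces to: \textbf{show that the cyclic shift $\sigma_k$ is realized by a set of paths that is a valid switch configuration}, equivalently that $\sigma_k$ is one of the $\gamma$ permutations counted in Definition~\ref{def:swPotential}.

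The cleanest route is to use the well-known recursive / stage decomposition of the FFT already described in the introduction. I would argue by induction on $\log n$. The FFT DAG on $n$ inputs decomposes as: a first layer of $n/2$ butterflies pairing $\langle 2t,0\rangle$ with $\langle 2t+1, 0\rangle$ (differing in the least significant bit), followed by a permutation wiring, followed by two disjoint copies of the $(n/2)$-input FFT DAG acting on the even-indexed and odd-indexed wires. Alternatively, and perhaps more transparently for cyclic shifts, I would use the ``other'' decomposition: the $n$-input FFT consists of a first stage that is itself two disjoint $(n/2)$-input FFTs on the top half $\{0,\dots,n/2-1\}$ and bottom half $\{n/2,\dots,n-1\}$, followed by a final layer of butterflies pairing $\langle w, \log n - 1\rangle$ with $\langle w + n/2, \log n - 1\rangle$. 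A cyclic shift by $k$ can then be handled by splitting on whether $k < n/2$ or $k \ge n/2$ and tracking how the shift restricts to the two halves; the final butterfly layer provides exactly the cross-over needed to move a block of wires from one half to the other. One then checks that the restriction of the permutation induced on each half is again realizable, either by the inductive hypothesis (it is again a cyclic shift, possibly composed with a ``wrap'' that the butterfly layer absorbs) or directly.

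The main obstacle I anticipate is the bookkeeping in the inductive step: a cyclic shift of $\{0,\dots,n-1\}$ does \emph{not} restrict to cyclic shifts of the two halves $\{0,\dots,n/2-1\}$ and $\{n/2,\dots,n-1\}$ --- rather, it splits each half into two contiguous arcs that get sent to different halves, so the induction hypothesis must be set up carefully, perhaps by proving the stronger statement that the FFT realizes every ``rotation''-type permutation on each sub-block, or by phrasing everything in terms of the Omega/banyan network's known routing capability for cyclic shifts (a cyclic shift of a power-of-two-sized array is a classic ``admissible'' permutation for the butterfly, being affine in the bit representation up to carries). To keep the argument self-contained I would instead give a direct, non-inductive construction: write $i$ and $(i+k)\bmod n$ in binary, and define the path from $\langle i,0\rangle$ to $\langle (i+k)\bmod n, \log n\rangle$ to be the unique FFT path between them; then the only thing to verify is arc-disjointness, which by the counting remark above amounts to checking that the map $i \mapsto$ (the level-$\ell$ node on this path) is a bijection for every fixed $\ell$. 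Since on an FFT path the level-$\ell$ node has label obtained from $i$ by replacing its top $\ell$ bits with the top $\ell$ bits of the destination, this level-$\ell$ label is $(\text{high } \ell \text{ bits of } (i+k)\bmod n)\,\|\,(\text{low } \log n - \ell \text{ bits of } i)$; one checks this is injective in $i$ by a short case analysis on the carry out of the low $\log n - \ell$ bits when adding $k$. That carry analysis is the real content, and I expect it to be the crux of the write-up.
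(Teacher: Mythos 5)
Your final, direct construction is correct and is essentially the paper's own proof: route each input along the unique FFT path to its cyclically shifted output and check that, at every level $\ell$, the map from sources to level-$\ell$ nodes is injective, which gives node- and hence arc-disjointness. The only discrepancy is the bit-order convention: with the paper's definition (the arc from level $l$ to $l+1$ flips the $(l+1)$-th \emph{least} significant bit) the level-$l$ node on the path from $w$ is $\lfloor w/2^l\rfloor 2^l+((w+k)\bmod 2^l)$, so the low bits of the destination are fixed first and $w$ is recovered from this label with no carry analysis at all, whereas your most-significant-bit-first version requires the carry argument you flag as the crux (and, strictly, proves the lemma for an isomorphic relabeling of the DAG, which the definition of realizing cyclic shifts permits).
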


\begin{proof}
In an $n$-input FFT DAG there exists a path from $\langle w,
0\rangle$ to $\langle (w+k) \bmod n, \log n \rangle$, for any $0 \leq
w < n$ and $0\leq k<n$.
This path visits $\langle f(w,l), l \rangle$  for each $0\leq l \leq
\log n$,
where $f(w,l)=\lfloor w/2^l\rfloor 2^l + ((w+k) \bmod 2^l)$. 
We now show that each one of the above sequences of nodes is
a connected path.
Clearly, we have $f(w,0)=w$ and $f(w,\log n)=(w+k) \bmod n$.
The values $f(w,l)$ and $f(w,l+1)$ differ at most in the $(l+1)$-th
least significant bit: indeed, the two values share the $l$ least
significant
bits since $f(w,l) \equiv f(w,l+1) \bmod 2^l$, and they also share the
$\log
n-l-1$ most significant bits since $\lfloor
f(w,l)/2^{l+1}\rfloor=\lfloor
f(w,l+1)/2^{l+1}\rfloor$. Therefore the DAG nodes $\langle f(w,l), l
\rangle$
and $\langle f(w,l+1), l+1 \rangle$ are connected and the path is well
defined.
Moreover, no two paths share a node for a given $k$: assume by
contradiction that there exist two paths $K_1=(\langle w,
0\rangle,\ldots,
\langle (w+k) \bmod n, \log n \rangle)$ and $K_2=(\langle w',
0\rangle,\ldots,
\langle (w'+k) \bmod n, \log n \rangle)$, with $w\neq w'$, that share
a node;
then, there must exist a value $l$, with $0< l < \log n$,  such that
$f(w,l)=f(w',l)$; however, since $w=f(w,l)-(k \bmod 2^l)$, it follows
that
$w=w'$ which is in contradiction with the initial assumption $w\neq
w'$; we can
thus conclude that paths $K_1$ and $K_2$ do not share any
node. Therefore, we
have that the FFT DAG can realize cyclic shifts for any
$0 \leq k < n$, by suitably setting DAG nodes according to the set of
$n$ paths specified by $k$.
\end{proof}

We are now ready to prove Theorem~\ref{thm:FFT}, which provides the
first lower bound on the BSP communication complexity required for
evaluating a $n$-input FFT DAG that asymptotically matches upper
bound~\eqref{eq:valiantub} for any number of processors $p \leq
2n/e$. Before this, we shall use the cyclic shift technique to
complement the dominator technique in order to obtain, for the FFT, a
lower bound of the same form of~\eqref{eq:hk} in
Section~\ref{sec:previous}, but under milder assumptions on the
input/output protocol. Since, like the dominator technique, the cyclic
shift technique does allow for recomputation, this results in a
strengthened lower bound for the general case when recomputation is
allowed, and thus is of independent interest.

\begin{theorem}
Let $\A$ be an algorithm that evaluates the $n$-input FFT DAG (possibly,
\emph{with recomputation}) on a BSP with $p \geq 2$ processors,
such that initially each input is available to exactly one processor.
Then, if no processor evaluates more than $n/\epsilon$ outputs, for
some constant $\epsilon > 1$, the BSP communication complexity of $\A$
satisfies
\[
H_{\A} = \BOM{\frac{n \log n}{p \log((n/p) \log n)}}.
\]
\end{theorem}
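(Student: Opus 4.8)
The plan is to combine the dominator-based lower bound recalled in Section~\ref{sec:previous} with the cyclic shift technique of Section~\ref{sec:cyclicshift}, via a case analysis on the maximum number of input nodes that are initially available to a single processor; crucially, both of these techniques tolerate recomputation, so the resulting bound does as well. Fix $\nu = n\log(2n)$ (the number of nodes of the FFT DAG), recall from~\cite{HongK81} that in this DAG $k$ nodes dominate at most $D(k)\le 2k\log k$ nodes, and let $Q = \beta (n\log n)/(p\log((n/p)\log n))$ for a small constant $\beta$ to be chosen so that $Q \le \tfrac12 D^{-1}(\nu/p)$. A short calculation, plugging $k$ of order $(n\log n)/(p\log((n/p)\log n))$ into $2k\log k$, shows $D^{-1}(\nu/p) = \BOM{\tfrac{n\log n}{p\log((n/p)\log n)}}$, so such a $\beta$ exists and $Q$ is itself of that same order.

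\emph{Case 1: every processor initially holds at most $Q$ input nodes.} I would invoke the dominator argument exactly as outlined in Section~\ref{sec:previous}: some processor evaluates at least $\nu/p$ distinct nodes, and all of these are dominated by the at most $Q + \hat H_{\mathrm{rc}}$ nodes whose values that processor either holds initially or receives as messages (this domination property holds even if some nodes are re-evaluated, which is why recomputation is harmless here). Hence $D(Q + \hat H_{\mathrm{rc}}) \ge \nu/p$, so $Q + \hat H_{\mathrm{rc}} \ge D^{-1}(\nu/p)$, and by the choice of $\beta$ we get $H_{\A} \ge \hat H_{\mathrm{rc}} \ge \tfrac12 D^{-1}(\nu/p) = \BOM{\tfrac{n\log n}{p\log((n/p)\log n)}}$.

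\emph{Case 2: some processor $P_1$ initially holds $q > Q$ input nodes.} Here I would apply Lemma~\ref{lem:cycliclb} with this $P_1$ and with $P_0$ the virtual processor formed by the remaining $p-1\ge 1$ processors; this is legitimate because the FFT DAG realizes all $n$ cyclic shifts (Lemma~\ref{lem:FFTshift}) and, by hypothesis, each input is initially available to exactly one processor. Since $P_1$ evaluates $|O_1|\le n/\epsilon$ outputs and every output node is evaluated by $P_1$ or by $P_0$, we have $|O_0| \ge n - |O_1| \ge n(1-1/\epsilon)$ (this survives recomputation, as it only uses that every output is evaluated by at least one of the two sides). Then, dropping the nonnegative term $|I_0||O_1|$, Lemma~\ref{lem:cycliclb} yields
\[
H_{\A} \;\ge\; \frac{|I_0||O_1|+|I_1||O_0|}{2n} \;\ge\; \frac{|I_1||O_0|}{2n} \;=\; \frac{q\,|O_0|}{2n} \;>\; \frac{Q\,(1-1/\epsilon)}{2} \;=\; \BOM{\frac{n\log n}{p\log((n/p)\log n)}},
\]
using that $\epsilon$ is a constant. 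Since Cases 1 and 2 are exhaustive, the theorem follows.

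The main points requiring care are: (i) choosing the constant $\beta$ so that the single threshold $Q$ is simultaneously small enough that the dominator bound loses only a constant factor in Case 1 and large enough that $q>Q$ forces the right order of bound in Case 2 --- but since $Q$, $D^{-1}(\nu/p)$, and the target are all $\BT{\tfrac{n\log n}{p\log((n/p)\log n)}}$, any sufficiently small $\beta$ works; (ii) making the estimate $D^{-1}(\nu/p) = \BOM{\tfrac{n\log n}{p\log((n/p)\log n)}}$ precise from $D(k)\le 2k\log k$ (a routine but necessary computation, where one checks $2k\log k \ge \nu/p$ for $k$ of the claimed order); and (iii) the handling of recomputation in Case 2, where an output may be evaluated by several processors --- which is why the argument is phrased only through ``$|O_0|\ge n-|O_1|$'' rather than through a partition of the outputs. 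I expect (ii), though elementary, to be the fiddliest step, since it is where the $\log((n/p)\log n)$ denominator actually enters.
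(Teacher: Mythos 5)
Your proof is correct and follows essentially the same route as the paper's: a dichotomy on the initial input distribution, invoking the dominator bound~\eqref{eq:hk} when every (relevant) processor holds few inputs and Lemma~\ref{lem:cycliclb} together with Lemma~\ref{lem:FFTshift} and the hypothesis $|O_1|\le n/\epsilon$ when some processor holds many. The only (immaterial) difference is that the paper fixes $P_1$ to be a processor evaluating at least $(n\log n)/p$ nodes and splits on $|I_1|$, whereas you split on whether any processor holds more than $Q$ inputs; both dichotomies are exhaustive and yield the same bound.
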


\begin{proof}
Since the $n$-input FFT DAG has $n (\log n - 1)$ internal nodes and $n$
output nodes, at least one processor has to evaluate $x = (n \log n)/p$ nodes.
Let $P_1$ be one of such processors, and let $P_0$ be a virtual processor
consisting of the remaining $p-1$ processors. Denote by $I_1$ and $I_0$
the set of input nodes initially held by $P_1$ and $P_0$, respectively.
We also denote by $O_1$ and $O_0$ the set of output nodes evaluated
by $P_1$ and $P_0$, respectively.

If $|I_1| \leq \beta x/\log x$, where $\beta$ is a suitably small constant,
then the lower bound in~\eqref{eq:hk} applies, and the theorem follows.
Otherwise, if $|I_1| > \beta x/\log x$, we shall leverage the hypothesis
whereby no processor evaluates more than $n/\epsilon$ outputs, for some
constant $\epsilon > 1$. This gives $|O_1| \leq n/\epsilon$, which in
turn implies $|O_0| \geq n - n/\epsilon = n (\epsilon - 1)/\epsilon$.
Then, since by hypothesis each input is initially available to exactly one
processor, and since by Lemma~\ref{lem:FFTshift} the FFT DAG can
realize all the $n$ cyclic shifts, we can apply Lemma~\ref{lem:cycliclb},
obtaining
\[
H_{\A} \geq \frac{|I_0||O_1|+|I_1||O_0|}{2n} \geq \frac{|I_1||O_0|}{2n} >
\frac{\beta n \log n  \cdot n (\epsilon - 1)}{2n \cdot p \log((n/p) \log n) \cdot \epsilon} =
\BOM{\frac{n \log n}{p \log((n/p) \log n)}},
\]
as desired.
\end{proof}

We now use the switching potential technique in synergy with
the cyclic shift technique to derive a tight lower bound
for the case when recomputation is disallowed. We shall consider
only the case $p \leq N/e$; nevertheless, Theorem~\ref{thm:mainlb}
also encompasses the case $p > N/e$. This case can be analyzed
in a similar way as we analyze the case $p \leq N/e$.

\begin{theorem}\label{thm:FFT}
Let ${\A}$ be any algorithm that evaluates without recomputation the
$n$-input FFT DAG on a BSP with $p \leq 2n/e$ processors, and let
$q$ be the maximum number of output nodes evaluated by a processor.
If initially each input is available to exactly one processor, then the communication
complexity of algorithm $\A$ satisfies
\[
H_{\A} \geq \frac{n \log(n/(8q^2))}{4 p \log(2 n/p)} + \frac{\min\{q, n-q\}}{4}.
\]
Moreover, if $q \leq n/2$,
\[
H_{\A} > \frac{n \log(n/2)}{{8} p \log(2 n/p)}.
\]
\end{theorem}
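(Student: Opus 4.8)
The plan is to combine Theorem~\ref{thm:mainlb} (the switching potential bound) and Theorem~\ref{thm:cycliclb} (the cyclic shift bound) for the $n$-input FFT DAG, each contributing one of the two terms. First I would collect the relevant parameters of the FFT DAG established earlier: it is a switching DAG with switching size $N = 2n$, maximum out-degree $\Delta = 2$, and by Lemma~\ref{lem:FFTsp} switching potential $\gamma = 2^{n(\log n - 1)}$. The hypothesis states that no processor evaluates more than $q$ output nodes; since each output node of the FFT has in-degree $2$, the sum of the in-degrees of the output nodes evaluated by any one processor is at most $U = 2q$. I would also note $p \leq 2n/e = N/e$, so we are in the first case of~\eqref{eqn:mainlb}.

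Next I would plug these into Theorem~\ref{thm:mainlb}. We need to estimate $\log\bigl(\gamma/(U!)^{N/U}\bigr) = n(\log n - 1) - \tfrac{N}{U}\log(U!)$. Using $N/U = 2n/(2q) = n/q$ and the bound $\log(U!) \le U\log U = 2q\log(2q)$, we get $\tfrac{N}{U}\log(U!) \le (n/q)\cdot 2q\log(2q) = 2n\log(2q)$. Hence the numerator is at least $n(\log n - 1) - 2n\log(2q) = n\log\bigl(n/(2\cdot 4q^2)\bigr) = n\log(n/(8q^2))$. Substituting into the first branch of~\eqref{eqn:mainlb} with $\Delta = 2$ and $N = 2n$ yields $H_\A \ge \tfrac{n\log(n/(8q^2))}{2p\log(2n/p)}$. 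This is a factor $2$ stronger than the first term in the claimed bound, so I have slack; I would keep only $\tfrac{n\log(n/(8q^2))}{4p\log(2n/p)}$, which is what we need, using the remaining factor to average against the second contribution.

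For the second term, the FFT DAG has $n$ inputs and $n$ outputs, and by Lemma~\ref{lem:FFTshift} it realizes all $n$ cyclic shifts; with each input initially available to exactly one processor and some processor evaluating (exactly) $q$ outputs (after splitting off the last FFT stage so the outputs of the DAG are genuine output nodes, or simply taking $|O_1| = q$ as in the statement), Theorem~\ref{thm:cycliclb} gives $H_\A \ge \min\{q, n-q\}/2$. Now both $H_\A \ge \tfrac{n\log(n/(8q^2))}{2p\log(2n/p)}$ and $H_\A \ge \tfrac{\min\{q,n-q\}}{2}$ hold, so averaging them, $H_\A \ge \tfrac12\bigl(\tfrac{n\log(n/(8q^2))}{2p\log(2n/p)} + \tfrac{\min\{q,n-q\}}{2}\bigr) = \tfrac{n\log(n/(8q^2))}{4p\log(2n/p)} + \tfrac{\min\{q,n-q\}}{4}$, which is the first displayed inequality.

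For the second displayed inequality, assume $q \le n/2$. If $q$ is ``small'', the first term dominates; if $q$ is ``large'' (close to $n/2$), the $\min\{q,n-q\} = q$ term dominates. The cleanest route is a case split at a threshold like $q \le \sqrt{n}/c$ for a suitable constant: when $q$ is below the threshold, $\log(n/(8q^2)) \ge \log(n/2) \cdot (1 - o(1))$ and the first term alone exceeds $\tfrac{n\log(n/2)}{8p\log(2n/p)}$; when $q$ is above the threshold, $\min\{q,n-q\}/4 = q/4 \ge \tfrac{\sqrt n}{4c}$, and since $p\log(2n/p) = O(n\log n)$ one checks $q/4$ already dominates $\tfrac{n\log(n/2)}{8p\log(2n/p)}$ (which is $O(\sqrt n / \text{polylog})$ in the relevant regime; more carefully, $\tfrac{n\log(n/2)}{8p\log(2n/p)} \le \tfrac{n}{8p} \le$ something controlled by $p \ge 2$, needing a short argument). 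The main obstacle is precisely making this second-inequality case analysis fully tight with the stated constant $8$ rather than a weaker one — the first inequality is essentially mechanical once the parameters are substituted, but squeezing $\tfrac{n\log(n/2)}{8p\log(2n/p)}$ out of the sum for \emph{all} $q \le n/2$ and all $2 \le p \le 2n/e$ requires careful bookkeeping of where $\log(n/(8q^2))$ degrades and compensating with the linear-in-$q$ term, and verifying the constants line up at the crossover point.
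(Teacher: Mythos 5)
Your derivation of the first displayed inequality is correct and is essentially the paper's own argument: plug $N=2n$, $\Delta=2$, $U\leq 2q$, $\gamma=2^{n(\log n-1)}$ into Theorem~\ref{thm:mainlb}, bound $\log(U!)\leq U\log U$ to get $H_\A \geq \frac{n\log(n/(8q^2))}{2p\log(2n/p)}$, invoke Theorem~\ref{thm:cycliclb} for $H_\A \geq \min\{q,n-q\}/2$, and average the two bounds.

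For the second displayed inequality, however, your sketch has a genuine gap. The missing ingredient is the constraint $q \geq n/p$: since the $n$ output nodes are each evaluated exactly once by one of the $p$ processors, some processor evaluates at least $n/p$ of them. The paper's proof hinges on this. It weakens the first term's denominator from $4$ to $8$, defines $H'(q) = \frac{n\log(n/(8q^2))}{8p\log(2n/p)} + \frac{q}{4}$, checks by differentiation that $H'$ is non-decreasing for $q \geq n/(p\log(2n/p))$, and then uses $q \geq n/p > n/(p\log(2n/p))$ (valid because $p \leq 2n/e$) to conclude $H'(q) \geq H'(n/p)$, which evaluates \emph{exactly} to $\frac{n\log(n/2)}{8p\log(2n/p)}$. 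Your proposed case split at $q \approx \sqrt{n}/c$ does not work as stated: in the ``small $q$'' branch, for $q$ near $\sqrt{n}/c$ one has $\log(n/(8q^2)) = \log(c^2/8)$, a constant, so the first term is only $\Theta\bigl(n/(p\log(2n/p))\bigr)$ --- a factor $\log n$ short of the target --- and your claim that $\log(n/(8q^2)) \geq \log(n/2)(1-o(1))$ there is false. In the ``large $q$'' branch, the inequality $\frac{n\log(n/2)}{8p\log(2n/p)} \leq \frac{n}{8p}$ that you lean on holds only for $p\leq 4$, and without $q\geq n/p$ you cannot force $q/4$ to dominate the target for small $p$ (e.g., $p=2$ gives a target of roughly $n/16$). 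So the first inequality is mechanical as you say, but the second requires the pigeonhole bound $q\geq n/p$ together with a monotonicity (or equivalently anchored) argument at $q = n/p$, which your proposal does not supply.
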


\begin{proof}
Since the in-degree of output nodes is two, we have that the sum of
the in-degree of the output nodes evaluated by each processor is at
most $U \leq 2q$. As recomputation is ruled out, we can apply
Theorem~\ref{thm:mainlb} with $N=2n$, $\Delta=2$, $U \leq 2q$, and
$\gamma = 2^{n(\log n -1)}$, obtaining, after some manipulations,
\[
H_{\A} \geq \frac{\log(\gamma/(U!)^{N/U})}{\Delta p \log (N/p)} \geq
\frac{n \log(n/(8 q^2))}{2 p \log(2 n/p)}.
\]
By hypothesis,  each input is initially available to exactly one processor,
and there exists some processor which evaluates (exactly) $q$ output nodes;
moreover, by Lemma~\ref{lem:FFTshift}, the FFT DAG can realize all cyclic shifts.
Therefore, we can apply Theorem~\ref{thm:cycliclb}, which gives
$H_{\A} \geq {(1/2)\min\{q, n-q\}}$.
By combining these two lower bounds we obtain the first claim of the theorem.

Consider now the case $q \leq n/2$. In this case we can write
\[
H_{\A} \geq \frac{n \log(n/(8q^2))}{4 p \log(2 n/p)} + \frac{q}{4} >
\frac{n \log(n/(8 q^2))}{8 p \log(2 n/p)} + \frac{q}{4}.
\]
Let us denote with $H'(q)$ the rightmost term of the above
inequality, that is,
\[
H'(q) =  \frac{n \log(n/(8q^2))}{8 p \log(2 n/p)} + \frac{q}{4}.
\]
By deriving $H'(q)$ with respect to $q$, we can see that
$H'(q)$ is non-decreasing for $q \geq n/(p\log(2n/p))$.
Therefore, since $q \geq n/p$, and since $n/p > n/(p\log(2n/p))$
(because of the $p \leq 2n/e$ hypothesis), we have that
$H'(q) \geq H'(n/p)$, and thus
\[
H_{\A} > H'(q) \geq H'(n/p) = \frac{n \log(n/2)}{8 p \log(2 n/p)},
\]
which proves the second claim of the theorem.
\end{proof}

\subsection{Sorting and Permutation Networks\label{sec:permnet}}

In this section we apply our technique to bound from below the
BSP communication complexity of the computation DAGs that
correspond to sorting and permutation networks. These networks,
such as the Bene\v{s} permutation network~\cite{Benes64} and the
bitonic~\cite{Batcher68} and AKS sorting networks~\cite{AjtaiKS83},
can be interpreted as switching DAGs and have the property
that they can realize all the possible permutations, and thus
the switching potential technique can be naturally applied to them. 
Since our technique abstracts the DAG under consideration by
considering only one general parameter (its switching potential),
the lower bound obtained in this section is \emph{universal} in
the sense that it holds for \emph{any} sorting or permutation network.

We now briefly recall the definitions of such networks. More complete
descriptions can be found in~\cite{Knuth73,Leighton92,Savage98}.
A \emph{comparator network} is an acyclic circuit of
\emph{comparators}. A comparator is a $2$-input
$2$-output operator which returns the minimum of the two inputs on one
output, and the maximum on the other.
An $n$-input comparator network is called a \emph{sorting network} if
it produces the same output sequence
on all $n!$ permutations of the inputs. Thus, sorting networks can be
seen as a simple model for data-oblivious
sorting algorithms, that is, algorithms that perform the same set of
operations for all values of the input data.
A \emph{routing network} is an acyclic circuit of \emph{switches}. A
switch is a $2$-input $2$-output operator
which either passes its two inputs to its outputs, or it swaps
them. An $n$-input routing network is called a
\emph{permutation network} if for each of the $n!$ permutations of the
inputs there exists a setting of the
switches that creates $n$ disjoint paths from the $n$ inputs to the
$n$ outputs. Observe that, suitably modified
to transmit messages, every sorting network is a permutation network,
but the converse is not true.

A sorting/permutation network can be naturally modeled as a
computation DAG by associating to each comparator/switch a pair of
(internal) nodes of the DAG. Both nodes have the same two
predecessors, and thus receive in input the same two values $a$ and
$b$, and both nodes have out-degree two, but one node computes the
function $x = \min\{a,b\}$ while the other computes the function
$y = \max\{a,b\}$. The resulting DAG is therefore a switching
DAG of input size $n$, switching size $N = 2n$, and with $\Delta = 2$.
The following lemma shows that the DAG of any sorting or permutation
network has switching potential $\gamma \geq n!$, and can realize all
cyclic shifts.

\begin{lemma}
The DAG of any sorting or permutation network with input size $n$ has
switching potential $\gamma\geq n!$ and can realize all cyclic shifts
of order $n$.
\end{lemma}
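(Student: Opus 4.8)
The plan is to establish the two claims separately, since they are orthogonal. For the switching potential bound $\gamma \geq n!$, I would argue that a sorting or permutation network, by definition, admits for every one of the $n!$ permutations $\pi$ of $\{1,\dots,n\}$ a setting of its comparators/switches that routes input $i$ to output $\pi(i)$ along $n$ vertex-disjoint (hence arc-disjoint) paths. In the switching-DAG model, each comparator/switch becomes a pair of internal nodes with out-degree and in-degree two, and a ``setting of the switch'' corresponds precisely to a choice of one-to-one relation between the incoming and outgoing arcs at those nodes; thus every switch setting yields a partition of $E$ into arc-disjoint paths, and distinct permutations $\pi$ yield distinct induced permutations $\rho$ on the $N = 2n$ numbered arcs (after fixing the natural correspondence that pairs the two copies of each input and each output). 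Since all $n!$ permutations are realizable, at least $n!$ distinct $\rho$ arise, so $\gamma \geq n!$. The one subtlety to handle carefully is the translation between the ``$n$-input, $n$-output'' picture of the network literature and the ``switching size $N = 2n$'' picture here: I would make explicit that the relevant correspondence between $E_{\mathrm{in}}$ and $E_{\mathrm{out}}$ pairs up the duplicated arcs consistently, so that realizing the permutation $\pi$ on logical lines corresponds to a genuine $\rho$ contributing to $\gamma$.

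For the cyclic-shift claim, I would simply observe that each cyclic shift $\sigma_k$ is one of the $n!$ permutations, and a permutation/sorting network can route any permutation via arc-disjoint paths from inputs to outputs; fixing the labeling of $V_{\mathrm{in}}$ and $V_{\mathrm{out}}$ by the network's input/output line indices, this is exactly the definition (given in Section~\ref{sec:cyclicshift}) of a DAG that can realize all cyclic shifts of order $n$. For a sorting network one must note the standard fact (mentioned already in the excerpt) that a sorting network, suitably reinterpreted so that its comparators pass rather than compare, is a permutation network, so the routing interpretation is available in both cases.

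The main obstacle --- really the only place where care is needed rather than routine --- is the bookkeeping in the first part: ensuring that distinct logical permutations $\pi$ really do give distinct arc-permutations $\rho$ under the chosen numbering of $E_{\mathrm{in}}$ and $E_{\mathrm{out}}$, and that a switch setting legitimately induces an arc-disjoint path partition of all of $E$ (not merely of the arcs on the $n$ routing paths). Both follow from the fact that in these networks every internal node has in-degree equal to out-degree equal to two and every arc lies on exactly one of the vertex-disjoint routing paths in any given setting, so the $n$ paths automatically exhaust $E$; I would state this cleanly and then conclude $\gamma \geq n!$. The rest is immediate from the definitions.
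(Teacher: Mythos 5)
Your overall route is the same as the paper's (lift the $n!$ network routings to arc-disjoint path systems in the DAG, and get the cyclic shifts for free since each $\sigma_k$ is just one of the $n!$ permutations), and the cyclic-shift half is fine: the definition of ``realizing all cyclic shifts'' only asks for $n$ arc-disjoint paths from $v_i$ to $v'_{\sigma_k(i)}$, which the routing paths supply directly. But there is a genuine error at exactly the step you flagged as the only delicate one. You justify the claim that a routing of $\pi$ induces a partition of all of $E$ by asserting that ``every arc lies on exactly one of the vertex-disjoint routing paths, so the $n$ paths automatically exhaust $E$.'' That is false. In the DAG model each comparator becomes \emph{two} internal nodes sharing the same two predecessors, so every wire of the network corresponds to two arcs of the DAG; the $n$ vertex-disjoint routing paths use exactly one of the two outgoing arcs of each input node and one incoming/one outgoing arc at each internal node they visit, hence they cover only half of $E$ (and only $n$ of the $2n$ arcs of $E_{\mathrm{in}}$). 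Since the switching potential of Definition~2 counts permutations $\rho$ of all $N=2n$ numbered arcs arising from partitions of the \emph{entire} arc set, the $n$ routing paths alone do not yet yield a permutation contributing to $\gamma$.

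The missing step — which is the actual content of the paper's proof — is to observe that after removing the $n$ routing paths, every input node has exactly one unused outgoing arc, every output node exactly one unused incoming arc, and every internal node exactly one unused incoming and one unused outgoing arc; these leftovers therefore assemble (uniquely) into $n$ further arc-disjoint paths from $E_{\mathrm{in}}$ to $E_{\mathrm{out}}$, giving a genuine partition of $E$ into $N=2n$ paths. Distinctness of the resulting $\rho$'s for distinct $\pi$'s then follows from your pairing of the duplicated arcs, so $\gamma\ge n!$. Alternatively, your earlier remark that a full one-to-one configuration at every internal node automatically yields a partition of $E$ (the general switching-DAG fact stated after Definition~2) would also close the gap, provided you say explicitly that the network routing extends to such a full configuration rather than claiming the $n$ paths themselves exhaust $E$.
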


\begin{proof}
Since any sorting or permutation network can perform all the $n!$
permutations of $n$ inputs, there exist $n!$ sets $\mathcal S$
of $n$ arc-disjoint paths connecting input and output nodes. 
Each set $S \in \mathcal S$ of $n$ arc-disjoint paths determines a set
of $n$ paths from $n$ outgoing arcs of the input nodes to $n$ incoming
arcs of the output nodes. 
In order to get the claimed switching potential, we have to construct
additional $n$ paths from the $n$ outgoing arcs of the input nodes to
the $n$ incoming arcs of the output nodes that are not used in $S$.
We observe that each path in a $S\in \mathcal S$ uses only one of the
two incoming arcs of each (non-input) node, and only one of the two
outgoing arcs of each (non-output) node.
By exploiting the unused pair of incoming/outgoing arcs in each
internal node, it is possible to uniquely construct the missing paths.
Therefore, there exists $n!$ sets of $N=2n$ arc-disjoint paths
connecting input and output nodes, and the first part of the claim
follows.\footnote{
In a DAG corresponding to a sorting or permutation network with $n$ inputs 
and internal nodes of in-degree $\delta$, the switching size is $N=\delta n$ 
since there are $\delta$ outgoing edges per input node.
Therefore, the switching potential $\gamma$ can be as large as $N!=(\delta n)!$.
The present argument shows that $\gamma$ is at least $n!$. 
}
Furthermore,  the set $\mathcal S$ contains the set of $n$
arc-disjoint paths of all the $n$ cyclic shifts by definition of
sorting and permutation networks, and the second part of the claim
follows as well.
\end{proof}

We are therefore in a similar situation as for the FFT DAG, with a
slightly different switching potential $\gamma$,
and it is therefore sufficient to mimic the proof for
Theorem~\ref{thm:FFT}. We have the following result.

\begin{theorem}\label{thm:networks}
Let $\mathcal{N}$ be any sorting or permutation network with $n$ inputs.
Let $\A$ be any algorithm that evaluates without recomputation the DAG
corresponding to $\mathcal{N}$ on a BSP with $p \leq 2n/e$ processors,
and let $q$ be the maximum number of output nodes
evaluated by a processor. Then the BSP communication complexity
of algorithm $\A$ satisfies
\[
H_{\A} \geq \frac{n \log(n/(4 e q^2))}{4 p \log(2 n/p)} +
\frac{\min\{q, n-q\}}{4}.
\]
Moreover, if $q \leq n/2$,
\[
H_{\A} > \frac{n \log(n/e)}{8 p \log(2 n/p)}.
\]
\end{theorem}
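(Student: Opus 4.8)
The plan is to mimic the proof of Theorem~\ref{thm:FFT} almost verbatim, substituting the switching potential $\gamma \geq n!$ (in place of $\gamma = 2^{n(\log n-1)}$ for the FFT) established in the preceding lemma. First I would invoke Theorem~\ref{thm:mainlb} with the parameters of the sorting/permutation network DAG, namely $N = 2n$, $\Delta = 2$, and $U \leq 2q$ (since each output has in-degree two and a processor evaluates at most $q$ outputs). Plugging in $\gamma \geq n!$ gives
\[
H_{\A} \geq \frac{\log\bigl(n!/(U!)^{N/U}\bigr)}{\Delta p \log(N/p)} \geq \frac{\log\bigl(n!/((2q)!)^{n/q}\bigr)}{2 p \log(2n/p)}.
\]
The main routine obstacle is bounding $\log(n!/((2q)!)^{n/q})$ from below cleanly: using $n! \geq (n/e)^n$ and $(2q)! \leq (2q)^{2q}$ one gets $\log\bigl(n!/((2q)!)^{n/q}\bigr) \geq n\log(n/e) - (n/q)\cdot 2q\log(2q) = n\log(n/e) - 2n\log(2q) = n\log\bigl(n/(4eq^2)\bigr)$. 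Dividing by $2p\log(2n/p)$ yields the switching-potential term $\tfrac{n\log(n/(4eq^2))}{2p\log(2n/p)}$; halving it (as a weaker bound) gives the $\tfrac{n\log(n/(4eq^2))}{4p\log(2n/p)}$ appearing in the statement, with the $4eq^2$ in place of the FFT's $8q^2$ accounting for the different constant in $\gamma$.

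Second, I would apply Theorem~\ref{thm:cycliclb}: by the lemma, the network DAG realizes all cyclic shifts of order $n$, and by hypothesis each input is available to exactly one processor and some processor evaluates exactly $q$ outputs, so $H_{\A} \geq \tfrac{1}{2}\min\{q, n-q\}$. Averaging this with (half of) the switching-potential bound gives the first claimed inequality
\[
H_{\A} \geq \frac{n\log(n/(4eq^2))}{4p\log(2n/p)} + \frac{\min\{q,n-q\}}{4}.
\]

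For the second claim, assuming $q \leq n/2$ so that $\min\{q,n-q\} = q$, I would follow the FFT argument exactly: define $H'(q) = \tfrac{n\log(n/(4eq^2))}{8p\log(2n/p)} + \tfrac{q}{4}$, which lower-bounds $H_{\A}$ since the first displayed bound dominates $H'(q)$; then differentiate $H'$ in $q$ to show it is non-decreasing for $q \geq n/(p\log(2n/p))$. Since $q \geq n/p$ (some processor evaluates at least $n/p$ of the $n$ outputs) and $n/p > n/(p\log(2n/p))$ whenever $p \leq 2n/e$, monotonicity gives $H_{\A} > H'(q) \geq H'(n/p)$, and a direct substitution of $q = n/p$ simplifies $\log(n/(4e(n/p)^2)) = \log((p^2)/(4en)) = 2\log(p) - \log(4en)$... which must be handled carefully; the cleaner route, as in Theorem~\ref{thm:FFT}, is to observe that at $q = n/p$ the $\log(2n/p)$ in numerator and denominator partially cancel, leaving $H'(n/p) = \tfrac{n\log(n/e)}{8p\log(2n/p)}$ after simplification. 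I expect this final algebraic simplification — verifying that the substitution indeed collapses to $\tfrac{n\log(n/e)}{8p\log(2n/p)}$ — to be the only place requiring genuine care; everything else is a transcription of the FFT proof.
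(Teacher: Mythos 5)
Your proposal is correct and matches the paper exactly: the paper's proof of this theorem is literally the one-line remark ``analogous to the proof of Theorem~\ref{thm:FFT}, with $\gamma \geq n!$ in place of $\gamma = 2^{n(\log n - 1)}$,'' and you have carried out precisely that substitution, with the arithmetic ($n! \geq (n/e)^n$ and $((2q)!)^{n/q} \leq (4q^2)^n$ giving the $4eq^2$ term, and the cancellation at $q = n/p$ yielding $n\log(n/e)/(8p\log(2n/p))$) checking out. The only cosmetic point is that the input-distribution hypothesis (each input initially on exactly one processor), needed for the cyclic-shift term, is implicit in the theorem statement here, as it is in the paper's intended reading.
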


\begin{proof}
Analogous to the proof of Theorem~\ref{thm:FFT}, with $\gamma \geq n!$
in place of $\gamma=2^{n(\log n -1)}$.
\end{proof}

We observe that for sorting and permutation networks we are using a
lower bound on the switching potential
$\gamma$ which is lower (when $n \geq 6$) than the value of the
switching potential of the FFT DAG, and
then the resulting lower bound on the BSP communication complexity has a
lower constant inside the logarithmic
term at the numerator with respect to the one in
Theorem~\ref{thm:FFT}. This is due to the generality of our
argument that applies to the entire family of sorting and permutation
networks. Better bounds can be obtained
for specific networks. For example, in the case of the Bene\v{s}
permutation network, the same bound of
Theorem~\ref{thm:FFT} applies since the corresponding DAG contains an
$n$-input FFT DAG.

Finally, we observe that the lower bound of Theorem~\ref{thm:networks}
is asymptotically tight, as the
computation DAG corresponding to the Bene\v{s} permutation network can
be evaluated with the same
strategy used for the FFT DAG, yielding a BSP communication complexity of
$O(n \log n/(p \log (n/p)))$.

\subsection{Boosting the Switching Potential Technique}
Since $\gamma \leq N!$, if applied to an entire switching DAG,
Theorem~\ref{thm:mainlb} cannot yield a lower bound larger than
$\BOM{(N \log N)/(p \log(N/p))}$.  However, by applying the theorem to
suitable parts of the DAG and composing the results, it is sometime
possible to obtain asymptotically larger lower bounds.  To illustrate
the approach, we study the DAG of the \emph{periodic balanced sorting
network} (PBSN)~\cite{DowdPRS89}, which consists of a sequence of
$\log n$ identical \emph{blocks}. Specifically, we consider the case
where $n$ is a power of two and the block is the \emph{balanced
merging network} (BMN), as in~\cite{DowdPRS89}.  (The analysis and
the result would also apply when the block is the odd-even merging
network).

The DAG of an $n$-input BMN is the following: a node is a pair
$\langle w,l \rangle$, with $0 \leq w < n$ and $0 \leq l \leq \log n$;
there exists an arc between two nodes $\langle w,l \rangle$ and
$\langle w',l' \rangle$ if and only if $l' = l+1$, and either $w=w'$
are identical or $w'=(i+1) n/2^l - j$ where $i=\lfloor w2^l/n\rfloor$
and $j= i \bmod (n/2^l)$.  The $n$-input BMN is a switching DAG since
for any internal node $v$ we have $\delta_{\mathrm{in}}(v) =
\delta_{\mathrm{out}}(v) = 2$, and its switching size is $N = 2n$.

The DAGs of the BMN and of the FFT are isomorphic, that is, there
exists an arc-preserving bijection between the two node
sets~\cite{Bilardi89}.  Hence, the BMN has the same switching
potential of the FFT DAG (i.e., $\gamma=2^{n (\log n-1)}$, see
Lemma~\ref{lem:FFTsp}) and can realize all cyclic shifts of order $n$
(see Lemma~\ref{lem:FFTshift}).  As a consequence, the lower bounds
stated in Theorem~\ref{thm:FFT} for the FFT DAG apply unchanged to the
DAG of the BMN.

By separately applying the switching potential technique to each of
the $\log n$ BMN blocks of a PBSN, we obtain the following result.

\begin{theorem}\label{thm:BSN}
Let ${\A}$ be any algorithm that evaluates without recomputation
the DAG of the $n$-input PBSN, where the block is a BMN,
on a BSP with $p \leq 2n/e$ processors. Let $q$ be the
maximum number of output nodes of each block evaluated by a processor.
If each input is initially available to exactly one processor and $q \leq n/2$,
then the BSP communication complexity of $\A$ satisfies
\[
H_{\A} > \frac{n \log(n/2)}{8 p \log(2 n/p)}\left \lceil \frac{\log n}{2}\right \rceil.
\]
\end{theorem}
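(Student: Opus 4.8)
The plan is to leverage the fact that a PBSN is a cascade of $\lceil \log n\rceil$ identical BMN blocks, and that each BMN block is isomorphic to the FFT DAG, so that the single-block bound of Theorem~\ref{thm:FFT} (specifically its second claim, $H > (n\log(n/2))/(8p\log(2n/p))$ under $q\le n/2$) applies to each block in isolation. The difficulty is that Theorem~\ref{thm:FFT} was proved assuming ``each input is available to exactly one processor,'' whereas for an interior block of the PBSN the ``inputs'' are the outputs of the preceding block, and their distribution among processors is whatever the algorithm happened to produce; we do not control it. The first key step, therefore, is to observe that the second claim of Theorem~\ref{thm:FFT} really rests on the switching-potential bound (Theorem~\ref{thm:mainlb}) applied with $\Delta=2$, $N=2n$, $\gamma=2^{n(\log n-1)}$, and $U\le 2q$, where $q$ bounds the number of \emph{output} nodes of that block charged to any processor; Theorem~\ref{thm:mainlb} makes no assumption whatsoever on how the \emph{inputs} are distributed. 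So for every one of the $\lceil\log n\rceil$ blocks, call them $B_1,\dots,B_{\lceil\log n\rceil}$, the restriction $\A_m$ of $\A$ to the subDAG $B_m$ is itself a recomputation-free BSP evaluation of a switching DAG isomorphic to the FFT, and Theorem~\ref{thm:mainlb} (via the manipulation in the proof of Theorem~\ref{thm:FFT}) gives
\[
H_{\A_m} \;>\; \frac{n\log(n/2)}{8\,p\log(2n/p)}
\]
for each $m$, using only that at most $q\le n/2$ outputs of $B_m$ land on any processor.

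The second key step is the composition, i.e., arguing that $H_{\A}\ge \sum_m H_{\A_m}$ — or at least $H_{\A}\ge \tfrac12\sum_m H_{\A_m}$, which already suffices to produce the claimed $\lceil (\log n)/2\rceil$ factor. The natural way is to partition the supersteps of $\A$: assign superstep $j$ to block $B_m$ if the communication occurring in $j$ serves data dependencies internal to $B_m$. Because the PBSN is a \emph{layered} DAG (block $B_{m+1}$ reads only nodes that are outputs of $B_m$), a single message of $\A$ can carry a value needed for at most a bounded number of blocks; in the cleanest case each message ``belongs'' to exactly one block, so $H_{\A}=\sum_j h_j \ge \sum_m \big(\sum_{j\in\text{block }m} h_j\big) = \sum_m H_{\A_m}$, giving $H_{\A}> \lceil\log n\rceil\cdot \frac{n\log(n/2)}{8p\log(2n/p)}$, which is even stronger than the stated bound. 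The factor-of-two slack (the $\lceil(\log n)/2\rceil$ rather than $\log n$) is presumably there to absorb the possibility that a message must be double-counted — e.g.\ the boundary outputs of $B_m$ that are simultaneously the inputs of $B_{m+1}$ may need to be charged to both blocks, so grouping the blocks into $\lceil(\log n)/2\rceil$ \emph{pairs} of consecutive blocks and charging each message to at most one pair is the safe route. Concretely: pair up $(B_1,B_2),(B_3,B_4),\dots$; within each pair, a message of $\A$ is charged to that pair iff it carries the value of an internal node of either block of the pair; each message is charged to at most one pair (distinct pairs share no internal nodes, and the shared boundary layer between $B_{2i}$ and $B_{2i+1}$ straddles two pairs but contributes a lower-order amount one can absorb); and within the pair the contributed communication is at least that of one block, namely $\frac{n\log(n/2)}{8p\log(2n/p)}$. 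Summing over the $\lceil(\log n)/2\rceil$ pairs yields exactly
\[
H_{\A} \;>\; \frac{n\log(n/2)}{8\,p\log(2n/p)}\left\lceil\frac{\log n}{2}\right\rceil.
\]

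I expect the main obstacle to be making the charging scheme airtight, i.e.\ verifying that restricting $\A$ to a block (or a pair of blocks) really yields a legitimate recomputation-free evaluation of that sub-DAG to which Theorem~\ref{thm:mainlb} applies, and that no message of $\A$ gets counted in two different groups. The no-recomputation hypothesis on $\A$ is what guarantees each internal node of each block is evaluated exactly once, so the restricted schedule inherits no-recomputation; the layered structure of the PBSN guarantees the sub-DAGs are arc-disjoint except for the shared boundary layers, which is exactly why the blocks are paired rather than taken singly. Once the bookkeeping is set up, the per-group bound is an immediate invocation of the FFT-for-BMN specialization of Theorem~\ref{thm:FFT} (using the isomorphism of~\cite{Bilardi89} so that $\gamma = 2^{n(\log n-1)}$ and cyclic shifts are realizable, though only the $\gamma$ value is needed for the stated bound), and the final summation is immediate.
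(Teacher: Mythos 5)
There is a genuine gap in your composition step. The BSP communication complexity is $H_{\A}=\sum_j h_j$ where $h_j$ is the \emph{maximum over processors} of the number of messages sent or received in superstep $j$; it is not a count of messages. Consequently, partitioning the \emph{messages} of $\A$ among blocks (or pairs of blocks) does not partition $H_{\A}$: if in one superstep processor $P_1$ sends $h$ messages serving block $B_1$ while processor $P_2$ sends $h$ messages serving block $B_3$, that superstep contributes $h$ to $H_{\A}$ but $h$ to each of $H_{\A_1}$ and $H_{\A_3}$, so $\sum_m H_{\A_m}$ can exceed $H_{\A}$ and the inequality you need goes the wrong way. Your assignment ``superstep $j$ belongs to block $B_m$'' silently assumes each superstep serves only one block, which the algorithm need not respect. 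The paper closes this hole with a \emph{temporal} argument, not a charging argument: since every output of the BMN $s_{i+1}$ depends on every one of its inputs (a BMN sorts any bitonic sequence, so every input can reach every output), no input of $s_{i+2}$ is ready until \emph{all} outputs of $s_i$ have been computed; hence the evaluations of $s_i$ and $s_{i+2}$ occupy disjoint sets of supersteps, the $\lceil \log n/2\rceil$ odd-indexed blocks are evaluated in totally ordered, pairwise disjoint superstep intervals, and the single-block bounds add. This temporal separation between alternate blocks --- not boundary double-counting --- is the reason the factor is $\lceil(\log n)/2\rceil$ rather than $\log n$.

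Your per-block bound also has a flaw: you assert that only the switching potential $\gamma$ is needed and the cyclic-shift technique can be dropped. That is false. Theorem~\ref{thm:mainlb} alone gives only $\frac{n\log(n/(8q^2))}{4p\log(2n/p)}$ for a BMN block, which is \emph{negative} once $q\ge\sqrt{n/8}$; the second claim of Theorem~\ref{thm:FFT} reaches $\frac{n\log(n/2)}{8p\log(2n/p)}$ for all $q\le n/2$ only by adding the cyclic-shift term $\min\{q,n-q\}/4$ from Theorem~\ref{thm:cycliclb} and then minimizing the sum over $q$. So you cannot avoid verifying the input-distribution hypothesis of Lemma~\ref{lem:cycliclb} for interior blocks. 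Fortunately it does hold there: since $\A$ evaluates the PBSN without recomputation, each input of block $s_i$ (an output node of $s_{i-1}$) is evaluated by, and hence first available at, exactly one processor. With that observation, the correct route is: apply the full second claim of Theorem~\ref{thm:FFT} (switching potential \emph{plus} cyclic shifts, via the BMN--FFT isomorphism) to each odd block, and sum using the temporal disjointness above.
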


\begin{proof}
Consider the sequence $s_1,s_2,\dots,s_{\log n}$ of BMNs in the PBSN.
For any $1 \leq i < \log n - 1$, the evaluations of $s_{i}$ and
$s_{i+2}$ in algorithm $\A$ cannot overlap in time: as BMN sorts any
bitonic sequence~\cite{DowdPRS89}, any input value can reach any
output; therefore, no input value of $s_{i+2}$ is ready until all
output values of $s_{i}$ have been computed.  It follows that the
evaluations of the $\lceil \log n / 2 \rceil$ odd BMNs cannot overlap
in time. Since, by hypothesis, no processor evaluates more than $q
\leq n/2$ output nodes of each BMN, we can apply (the second part of)
Theorem~\ref{thm:FFT} to each BMN, and the claim follows.
\end{proof}

Observe that the PBSN contains $\Theta(n \log^2 n)$ comparators,
which is a factor $\log n$ more than the optimal value. Therefore,
it is natural that the BSP communication complexity for evaluating a
PBSN is a factor $\Omega(\log n)$ larger than the lower bound that
holds for any sorting network (Theorem~\ref{thm:networks}).
Nevertheless, a lower bound of the form of the one given in
Theorem~\ref{thm:BSN} cannot be derived by applying the switching
potential technique to the entire DAG of a PBSN. We are not aware of
any prior lower bounds of this form for computations that correspond to
the evaluation of PBSN.

\section{The Switching Potential Technique in a Parallel I/O Model}\label{sec:io}

In this section, we show how the switching potential technique
can be adapted to yield lower bounds in computational models
different from BSP.  Specifically, we consider a parallel variant of the
\emph{I/O model}, which includes, as special cases, both the I/O
model of Hong and Kung~\cite{HongK81} and the LPRAM model of Aggarwal et al.~\cite{AggarwalCS90}.

Our parallel I/O model consists of $p$ synchronous processors, each
with a (fast) local memory of $m$ words, which can access a (slow)
shared memory of (potentially) unbounded size. In each step, all the
processors perform the same instruction, which can be (i) an operation
on data in the local memory, (ii) a move of a word from the shared to
the local memory (i.e., a \textit{read} operation) or (iii) vice versa
(i.e., a \textit{write} operation).\footnote{Our lower bound can be
  trivially adjusted, dividing it by $b$, if an instruction can move
  $b$ memory words instead of just one.}  The \emph{I/O
  complexity} $H_\A$ of an algorithm $\A$ is the number of steps where
a read or write operation occurs.\footnote{We use here the same
  notation $H_\A$ for the I/O complexity as for the BSP communication
  complexity, to highlight the similar role of the two metrics in the
  context of the switching potential technique.}  We assume the input
and output of algorithm $\A$ to reside in the shared memory at the
beginning and at the end of $\A$, respectively; (if this is not the
case, our lower bound may still apply, after suitable modifications).
We refer to~\cite{Vitter06} for a survey on algorithms and data
structures for the I/O model.

Consider now an algorithm $\B$, for the parallel I/O model,
playing the envelope game on a switching DAG $G=(V,E)$, with
switching potential $\gamma$.  We assume that each envelope occupies
one memory word, whence there cannot be more than $m$ envelopes in
any local memory.  Due to the restrictions of the I/O model, algorithm
$\B$ can move the first envelope from a node $u$ only when all the
$\delta_{\mathrm{in}}(u)$ input envelopes are in the local memory of the same
processor.

As for the BSP model, a crucial observation is that the switching
potential technique does not simply arise from the data movement
implied by the $\gamma$ permutations that contribute to the switching
potential, but rather by the constraint that all those permutations
must be realizable under the same schedule. Formally, the
\emph{schedule} of an algorithm is defined by the sequence of read and
write operations, and by the memory locations in the shared and local
memories involved in each read and write operation.  Since the
schedule is given, only the content (not the source and the
destination) of a read and write operation can vary across different
runs of the envelope game that can result in the realization of
different permutations. Thus, at any given time, the memory locations
(both shared and local) that contain envelopes are independent of the
run of the game, while the mapping of the envelopes to those location
will generally differ across runs.  This fact allows us to introduce a
notion of redistribution potential, appropriate for the parallel I/O
model:
\begin{definition}
Consider an algorithm $\B$ for the parallel I/O model that plays the
envelope game on a switching DAG $G=(V,E)$. At any given step of the
algorithm, the \emph{envelope placement} is the specification, for
each envelope, of either the (address of the) shared memory location
or the (index of the) processor whose local memory contains that
envelope. (The exact position of the envelope within a local memory is
irrelevant.) The \emph{redistribution potential} at the $j$-th
read/write operation, denoted $\eta_j$, is the number of different
envelope placements, before the $j$-th read/write operation, that
are achievable in different runs, while complying with the schedule
of $\B$.
\end{definition}
From the initial and final conditions of the game, we have $\eta_1=1$
and $\eta_{H_\B+1}\geq \gamma$.

Intuitively, read and local operations do not increase the
redistribution potential. On the other hand, the increase due to write
operations is bounded by the amount of envelopes held locally by each
processor. This amount is naturally bounded by the size $m$ of the
local memory, but it is also effectively bounded by $N/p$, since the
$p$ processors together can at most hold $N$ envelopes, and it turns
out that a balanced allocation of envelopes to processors can result
in the maximum increase of redistribution potential.  These statements
are substantiated in the next lemma, leading to a lower bound on the
number of write steps needed to bring the redistribution potential
from $1$ to $\gamma$, hence to the I/O complexity.

\begin{lemma}\label{lem:io_pot}
The I/O complexity of an algorithm $\B$ that plays the envelope game
on a switching DAG $G = (V,E)$ in the parallel I/O model with $p$
processors, with local memory size $m$, satisfies
\[
H_\B \geq \frac{\log \gamma}{p \log \min\{m, N/p\}},
\]
where $\gamma$ is the switching potential of $G$.
\end{lemma}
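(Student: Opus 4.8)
The plan is to mirror the three-step structure already used for the BSP case (Lemmas~\ref{lem:degree1}, \ref{lem:etalb}, \ref{lem:etaub}), adapting each step to the parallel I/O model. Since here every step executes the \emph{same} instruction on all $p$ processors, there is no need for a degree-reduction lemma: a single read/write step already moves at most $p$ envelopes (one per processor), so the degree is automatically bounded. Thus I would go straight to bounding the per-step growth of the redistribution potential $\eta_j$, and then chain the bound over the $H_\B$ read/write steps.

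First I would argue that \emph{read} and \emph{local} operations do not increase $\eta_j$. A read moves an envelope from a fixed shared-memory address (fixed by the schedule) into a fixed processor's local memory (also fixed by the schedule); since both source and destination are schedule-determined, every run performs the identical move, and no new placement distinctions are created. Local operations touch no envelope placements at all. Hence $\eta_{j+1} = \eta_j$ across such steps, and only write steps can grow the potential. For a \emph{write} step, each processor $P_i$ holding $t_i$ envelopes in its local memory chooses which of its $t_i$ envelopes to write out to its (schedule-fixed) shared-memory target, so the number of distinct outcomes multiplies by at most $\prod_{i : t_i \geq 1} t_i$. Exactly as in the proof of Lemma~\ref{lem:etaub}, subject to $\sum_i t_i \leq N$ and $t_i \leq m$, this product is maximized by spreading envelopes as evenly as possible, giving a bound of $\min\{m, N/p\}^{p}$ per write step; I would note the two regimes ($p \leq N/m$ versus $p > N/m$) collapse into the single expression $\min\{m,N/p\}^p$ because the balanced allocation caps each $t_i$ at $\min\{m, N/p\}$.

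Chaining this over the at most $H_\B$ write steps (the other steps contribute factor $1$) and using $\eta_1 = 1$ yields $\eta_{H_\B+1} \leq \min\{m, N/p\}^{p H_\B}$. Combining with the lower bound $\eta_{H_\B+1} \geq \gamma$ (from the initial and final conditions of the envelope game, which forces all $\gamma$ permutations to be realizable), taking logarithms and solving for $H_\B$ gives
\[
H_\B \geq \frac{\log \gamma}{p \log \min\{m, N/p\}},
\]
as claimed. I would remark that, unlike the BSP case, the final-distribution factor $(U!)^{N/U}$ does not appear here because the envelope game's termination condition already fixes the placement of envelopes at the output nodes in the shared memory, so $\eta_{H_\B+1} \geq \gamma$ directly rather than $\gamma/(U!)^{N/U}$.

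I expect the main obstacle to be a careful justification of why a write step multiplies $\eta_j$ by at most $\prod_{i:t_i\geq 1} t_i$ rather than something larger: one must verify that the only new freedom introduced by a write is the \emph{choice of which locally held envelope to evict}, and that the resulting placement, together with one of the $\eta_j$ pre-write placements, determines the envelope carried, so that each post-write placement arises from at most a bounded number of (pre-write placement, per-processor choice) pairs. The remaining work — the convexity/balancing argument to pass from $\prod t_i$ to $\min\{m,N/p\}^p$ — is routine and identical in form to the argument already carried out in Lemma~\ref{lem:etaub}.
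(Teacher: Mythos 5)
Your proposal is correct and follows essentially the same route as the paper's proof: read and local operations leave the redistribution potential unchanged because the schedule fixes both source and destination, a write step multiplies it by at most $\prod_i N_i \leq \min\{m, N/p\}^p$, and chaining over the $H_\B$ I/O steps against $\eta_{H_\B+1} \geq \gamma$ gives the bound. Your closing remark on why the $(U!)^{N/U}$ factor disappears also matches the paper's own discussion of the I/O protocol forcing envelopes into shared memory at the end.
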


\begin{proof}
An operation on data in the local memories cannot change the
redistribution potential, since it does not affect the shared memory
and does not change the set of envelopes present in each local memory.
A read operation moves memory words from the shared memory to the
local memories, in a way that is uniquely prescribed by the schedule
of the algorithm.  Hence, to each envelope placement before the
operation there correspond a unique placement after the operation,
whence the redistribution potential does not increase.
  
On the other hand, each write operation increases the redistribution
potential by at most a factor $\prod_{i=1}^{p}N_i$, where $N_i$
denotes the number of envelopes currently in the $i$-th processor.  In
fact, the envelope to be moved to shared memory by the $i$-th
processor can only be one of the $N_i$ envelopes in its local memory.
(For clarity, we observe that the source and the target addresses of
the write are actually fixed by the schedule.  It is the envelope
currently located at the source address that can differ across
different runs of the game, due to permutations of envelopes within a
local memory.)

We now observe that, on the one hand, for each $i$, we have
$N_i \leq m$, whence $\prod_{i=1}^{p}N_i \leq m^p$. On the other
hand, by the invariance of the number of envelopes, we have
$\sum_{i=1}^{p}N_i =N$, a constraint under which $\prod_{i=1}^{p}N_i
\leq (N/p)^p$, as can be established by standard techniques.
We then have $\eta_{j+1} \leq (\min\{m,N/p\})^p \eta_j$, whence
$\eta_{H_\B+1} \leq \min\{m,N/p\}^{pH_\B}$. Since $\eta_{H_\B+1} \geq
\gamma$, the claim follows.
\end{proof}

We are now ready to provide a lower bound on the I/O
complexity of any algorithm evaluating a switching DAG $G$ without recomputation.

\begin{theorem}\label{thm:parIO}
Let $\A$ be any algorithm that evaluates \emph{without recomputation} a switching DAG 
$G = (V,E)$ in the parallel I/O model with $p$ processors and with local memory size $m$.
Let $N$, $\gamma$, and $\Delta$ be the switching size of $G$, the switching potential of $G$,
and the maximum out-degree of any node of $G$, respectively. 
Then the I/O complexity of algorithm $\A$ satisfies
\[
H_\A \geq \frac{\log \gamma}{\Delta p \log \min\{\Delta m, N/p\}}.
\]
\end{theorem}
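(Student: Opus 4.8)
The plan is to mimic the structure of the BSP case: first reduce the DAG evaluation problem (without recomputation) to the envelope game, losing a factor $\Delta$ in the communication metric, and then invoke Lemma~\ref{lem:io_pot} which bounds the I/O complexity of any envelope-game-playing algorithm from below in terms of the switching potential $\gamma$. So the first task is to establish the parallel-I/O analogue of Lemma~\ref{lem:game}: from an algorithm $\A$ that evaluates $G$ without recomputation with I/O complexity $H_\A$, construct an algorithm $\B$ for the parallel I/O model that plays the envelope game on $G$ with $H_\B \leq \Delta H_\A$. The construction is the same in spirit as in Lemma~\ref{lem:game}: each node $v$ is ``located'' at the processor that evaluates it (well-defined by no recomputation), the computation at an internal node is replaced by a switch that forwards its $\delta_{\mathrm{in}}(v)$ input envelopes to its $\delta_{\mathrm{out}}(v)$ output arcs, and each data dependency $(u,v)$ that crosses from $u$'s processor to $v$'s processor — which in the I/O model means the value is written to shared memory by one processor and read by another — is mirrored by moving the corresponding envelope along the same write/read path. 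Where $\A$ sends a single value through shared memory to $\ell \leq \Delta$ successor-processors, $\B$ must route up to $\Delta$ distinct envelopes, hence the factor $\Delta$; local operations and switch evaluations require no I/O. One must also check that the envelope-game rules (in particular Rule 5, that a node's envelopes are all simultaneously present before any leaves) are respected: this holds because $\A$ evaluates $u$ only once all its operands are available, so in $\B$ all $\delta_{\mathrm{in}}(u)$ envelopes must have been delivered to $u$'s processor's local memory before $u$'s switch fires.

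Once this reduction is in hand, Lemma~\ref{lem:io_pot} applied to $\B$ gives
\[
H_\B \geq \frac{\log \gamma}{p \log \min\{m, N/p\}},
\]
and combining with $H_\B \leq \Delta H_\A$ yields
\[
H_\A \geq \frac{H_\B}{\Delta} \geq \frac{\log \gamma}{\Delta p \log \min\{m, N/p\}}.
\]
This is close to, but not quite, the claimed bound, which has $\min\{\Delta m, N/p\}$ in the denominator rather than $\min\{m, N/p\}$. The discrepancy must be accounted for by sharpening the reduction: the natural move is to give $\B$ a local memory of size $\Delta m$ rather than $m$, justified by observing that when $\A$ holds $k$ values in its $m$-word local memory, the corresponding state of $\B$ may need to carry up to $\Delta$ envelopes per value (the $\Delta$ distinct envelopes that carry copies of the same computed value to distinct successors), so $\B$'s local envelope count is bounded by $\Delta m$ rather than $m$. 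Feeding local memory size $\Delta m$ into Lemma~\ref{lem:io_pot} then produces exactly $\min\{\Delta m, N/p\}$ in the denominator, and the theorem follows. (The $N/p$ term is unaffected, since the invariance of the total envelope count still forces $\sum_i N_i = N$ regardless of how the simulation inflates local counts.)

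The main obstacle I anticipate is the careful bookkeeping in the reduction: specifically, pinning down exactly why $\B$'s per-processor envelope count is bounded by $\Delta m$ and not something larger. One has to argue that although a single computed value of $\A$ expands into up to $\Delta$ envelopes in $\B$, at no point does a processor in $\B$ need to simultaneously hold more than $\Delta$ envelopes per value that $\A$ would hold — i.e., envelopes that have already been forwarded out of a node, or values of $\A$ that have been overwritten, do not persist. This requires being precise about the schedule of switch-firings in $\B$ relative to the schedule of evaluations and overwrites in $\A$: each envelope enters $\B$'s local memory only when it is about to be consumed by a switch or a write, and leaves immediately after, so the ``live'' envelopes at any step correspond to ``live'' values of $\A$ (of which there are at most $m$), each inflated by at most $\Delta$. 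A secondary, minor point is handling the input/output conventions: Lemma~\ref{lem:io_pot} and its antecedents assume inputs start in shared memory (or, for the envelope game, on the input nodes), and one should note that the initial placement of envelopes on input nodes is consistent with $\A$'s assumption that inputs reside in shared memory, so $\eta_1 = 1$ as required. The rest is a direct substitution, with no delicate calculation involved.
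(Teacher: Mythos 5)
Your proposal is correct and follows essentially the same route as the paper: reduce $\A$ to an envelope-game algorithm $\B$ on a parallel I/O machine with local memory inflated to $\Delta m$ (so that the up-to-$\Delta$ distinct envelopes per stored value fit), pay a factor $\Delta$ in I/O complexity because envelopes are distinguishable while a node's output value is not, and then apply Lemma~\ref{lem:io_pot} with memory parameter $\Delta m$ to get the $\min\{\Delta m, N/p\}$ term. The bookkeeping concern you flag is resolved in the paper exactly as you suggest — each of the at most $m$ values held by $\A$ expands into at most $\Delta$ envelopes, hence the $\Delta m$ bound — so there is no gap.
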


\begin{proof}
An algorithm $\A$ that evaluates $G$ on the parallel I/O model with local memory size $m$ and $p$ processors can be 
transformed into an algorithm $\B$ that plays the envelope game on the parallel I/O model with local memory size $\Delta m$ and $p$ processors.
Algorithm $\B$ is obtained from $\A$ with the following three changes. (1) Initially, for every $u \in
V_{\mathrm{in}}$, $\delta_{\mathrm{out}}(u)$ envelopes are placed on
$u$ and each envelope is univocally assigned
to an outgoing arc of the respective input node. (2) The computation
in each internal node $u$ is replaced with a switch that sequentially
forwards the $\delta_{\mathrm{in}}(u)$ input envelopes to the
$\delta_{\mathrm{out}}(u)$ output arcs according to some
permutation. This replacement requires that $\delta_{\mathrm{out}}(u) \leq \Delta$
 words are available to store the envelopes forwarded by $u$, whereas 
 only one word was required to  store the single output of node $u$: since the local memory is $\Delta$ times larger than the one used by algorithm $\A$, there is enough space to store the at most
$\Delta$ envelopes for each of the $m$ nodes kept in the local memory by $\A$.
 (3) For each arc $(u,v)$ where $P(v)$ differs from
$P(v)$, the envelope is first written in the shared memory by $P(u)$ and then read by $P(v)$ (possibly, other processors can read and write the envelope between these two operations).
As shown in Lemma~\ref{lem:game}, the above modifications guarantee that the six rules of the envelope game are satisfied.

We now observe that $H_\B\leq \Delta H_\A$. The first two changes do not increase the I/O complexity. On the other hand, the third change increases the I/O complexity by a factor $\Delta$: indeed, for each node $u$, the output values on the  $\delta_{\mathrm{out}}(u)$ outgoing edges can be stored by algorithm $\A$ in one word of the local/shared memory since the output values are indistinguishable; however, algorithm $\B$ requires $\delta_{\mathrm{out}}(u)\leq \Delta$ words since envelopes are distinct; therefore the I/O complexity of $\B$ is at most $\Delta H_\A$.

The theorem follows since, by Lemma~\ref{lem:io_pot}, an algorithm $\B$ playing the envelope 
game on an $\Delta m$-word local memory with $p$ processors requires 
$H_\B \geq (\log \gamma)/{(p \log \min\{\Delta m, N/p\})}$ I/Os.
Since $H_\B\leq \Delta H_\A$, the main claim follows.
\end{proof}

We now analyze the above lower bounds for two special cases: the sequential I/O model~\cite{HongK81} and the LPRAM~\cite{AggarwalCS90}.
The I/O model follows by the parallel I/O model by setting $p=1$. 
If we consider the FFT DAG and assume that $N \geq 2m$, we obtain a lower 
bound of $n \log (n/2) / (2\log (2m))$, which asymptotically matches the lower bound in~\cite{HongK81}.
On the other hand, the LPRAM is obtained by setting $m=+\infty$. In this case, we get for the FFT DAG a lower bound of $n \log (n/2) / (2p \log (2n/p))$, which asymptotically matches the lower bound in~\cite{AggarwalCS90}.

Interestingly, we observe that Theorem~\ref{thm:parIO} gives a tight
lower bound for the FFT DAG without additionally resorting to the
cyclic shift technique, as we did in the BSP model.  The reason for
such behavior is that the I/O protocol requires envelopes to be stored
in the shared memory at the beginning and end of the algorithm. Thus,
the redistribution potential cannot be increased without I/O
operations.  In contrast, in the BSP model, envelopes are contained in
the processors' local memories since there is no ``external storage''
where to store envelopes at the beginning and at the end of the
algorithm.  Therefore, if each BSP processor contains at most $U$ of
the $N$ envelopes, it is possible to get $(U!)^{N/U}$ permutations
even without communication, by just rearranging envelopes within each
processor (see the proof of Lemma~\ref{lem:etalb}). If $U$ is
sufficiently large compared to the switching potential $\gamma$,
almost all permutations can be reached without communication and we
then need the cyclic shift technique to reinforce the lower bound.

\section{Conclusions}\label{sec:conclusions}
In this paper we have studied some aspects of the complexity of
communication of parallel algorithms. We have presented new
techniques for deriving lower bounds on communication complexity
for computations that can be represented by a certain class of DAGs.
We have demonstrated the effectiveness of this technique by
deriving novel, mostly tight lower bounds for the FFT and for
sorting and permutation networks.

The present work can be naturally extended in several directions, some
of which are briefly outlined next. First, it would be interesting to
apply the switching potential technique to other DAG computations
beyond the few case studies of this paper. One example are DAGs that
correspond to merging networks. We conjecture that the switching
potential $\gamma$ of any DAG corresponding to a merging network of
input size $n = 2^k$ satisfies $\log \gamma = \BOM{n \log n}$; we also
conjecture that the same bound holds for any network that can realize
all cyclic shifts.  It is also natural to explore the application of
the switching potential technique to other models for distributed and
hierarchical computation.  Finally, one might ask whether the main
lower bounds presented in this paper also hold when recomputation of
intermediate values is allowed.

As a broader consideration, our lower bound techniques, as well
as others in the literature, crucially exploit the circumstance that
the execution of some algorithms embeds the evaluation of the same
DAG for different inputs.  The development of communication lower
bound techniques for algorithms (e.g., heapsort or quicksort) which
do not fall in this class remains an open, challenging problem.

\paragraph{Acknowledgements.}
The authors would like to thank Lorenzo De Stefani, Andrea
Pietracaprina, and Geppino Pucci for insightful discussions.

\bibliographystyle{abbrv}
\bibliography{biblio.bib}

\appendix
\section*{APPENDIX}
\setcounter{section}{1}

\subsection{An Improved Dominator Analysis for the FFT DAG}

Given a DAG, $D(k)$ denotes the maximum size of a set $U$ of nodes that has a dominator $W$ of size $k$.
Hong and Kung showed that, for the FFT DAG, $D(k) \leq 2k \log k$~\cite[Theorem~4.1]{HongK81}.
In this section we show that this bound can be improved to $D(k) \leq k \log 2k$, which
is tight for every $k$ power of two. This can be done by modifying the inductive proof
of Hong and Kung accordingly. In the proof we will use the following lemma.

\begin{lemma}\label{lem}
If $0 \leq x \leq y$ and $x+y \leq m$, then
\[
x \log x + y \log y + 2x \leq m \log m.
\]
\end{lemma}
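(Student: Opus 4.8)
The plan is to fix $m$ and $s=x+y\le m$, and to view the left-hand side as a function of $x$ alone, with $y=s-x$, on the interval $0\le x\le s/2$; I want to show it is maximized at an endpoint that can be bounded by $m\log m$. Define $g(x) = x\log x + (s-x)\log(s-x) + 2x$. Using natural logs for the derivative (the base-two constant $\ln 2$ factors out harmlessly), $g'(x) = \log x - \log(s-x) + 2/\ln 2$, which is increasing in $x$ on $(0,s/2)$ since $\log x - \log(s-x) = \log\bigl(x/(s-x)\bigr)$ is increasing there. Hence $g$ is convex on $[0,s/2]$, so its maximum on that interval is attained at one of the two endpoints $x=0$ or $x=s/2$.

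At $x=0$ we get $g(0) = s\log s \le m\log m$, since $s\le m$ and $t\mapsto t\log t$ is nondecreasing for $t\ge 1$ (and the statement is trivial if $s<1$, or one can treat $s=0$ separately with the convention $0\log 0 = 0$). At $x=s/2$ we get $g(s/2) = 2\cdot\frac{s}{2}\log\frac{s}{2} + s = s\log\frac{s}{2} + s = s(\log s - 1) + s = s\log s \le m\log m$ — the $+s$ from the $2x$ term exactly cancels the $-s$ coming from $\log(s/2)$. So in both endpoint cases the bound is $s\log s \le m\log m$, which completes the argument.

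The only genuinely delicate point is the handling of the boundary and small-value cases: the function $t\log t$ is not monotone on all of $(0,\infty)$ (it dips below zero on $(0,1)$), so I should either invoke the intended regime (in the FFT dominator application $x,y,m$ are positive integers, so $s\ge 1$ whenever $x\ge 1$, and the case $x=0$ is immediate) or note explicitly that for $0\le s\le 1$ one checks directly that $g(x)\le 2x\le 2s\le 2 \le m\log m$ fails only if $m<2$, in which case the dominator sizes involved are degenerate and handled separately. Given the context, I expect the clean statement to assume $x,y,m$ are positive integers with $x\ge 1$, making the convexity-plus-endpoints argument above fully rigorous with no edge-case fuss; the main obstacle is simply making sure the $2x$ term is accounted for at both endpoints, which the cancellation at $x=s/2$ resolves.
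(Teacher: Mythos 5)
Your proof is correct and takes essentially the same route as the paper's: the paper sets $m' = x+y$, asserts via ``standard calculus'' that the inequality holds on $[0, m'/2]$ with $m'$ in place of $m$, and then concludes from $m'\log m' \le m\log m$; your convexity-plus-endpoints argument (with the $2x$ term exactly cancelling the $-s$ at $x=s/2$) is precisely the calculus the paper leaves implicit. Your further observation that $t\log t$ is not monotone near the origin, so that the final step $s\log s\le m\log m$ genuinely needs the integer (or $s\ge 1$) regime of the intended application, is a legitimate point that the paper's own proof silently glosses over; the minor slip in the additive constant of $g'(x)$ is harmless since only the monotonicity of $g'$ is used.
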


\begin{proof}
Let $m' = x + y$. Standard calculus shows that, in the interval $0 \leq x \leq m'/2$,
\[
x \log x + (m'-x) \log (m'-x) + 2x \leq m' \log m'.
\]
Since, by hypothesis, $m' \leq m$, we obtain $x \log x + y \log y + 2x \leq m' \log m' \leq m \log m$.
\end{proof}

We are now ready to show the result claimed at the beginning of this section.

\begin{proposition}
For $k \geq 2$, any node set $U$ of the FFT DAG that has a dominator set of size
no more than $k$ can have at most $k \log 2k$ nodes.
\end{proposition}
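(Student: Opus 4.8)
The plan is to follow the inductive scheme of Hong and Kung~\cite{HongK81} but to account for the additive contributions precisely, replacing their factor‑$2$ slack with a single application of Lemma~\ref{lem}. Concretely, I would prove by strong induction on $j$ that in \emph{every} $2^j$-input FFT DAG, any node set having a dominator of size at most $k$ contains at most $g(k)$ nodes, where $g(k)=k\log 2k$ for $k\ge 1$ and $g(0)=0$ (the proposition is the case $k\ge 2$). The base case $j=0$ is immediate. For the inductive step I would use the standard recursive structure: removing the output level of a $2^j$-input FFT splits it into two vertex-disjoint $2^{j-1}$-input FFT DAGs $A_0,A_1$ (on the low- and high-index inputs), each output node has exactly one predecessor in $A_0$ and one in $A_1$, and each top-level node of $A_i$ is the $A_i$-predecessor of exactly two output nodes.

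Given $U$ with dominator $W$, partition $W=W_0\sqcup W_1\sqcup W_\ast$ accordingly (with $W_\ast$ the part at the output level), put $a_i=|W_i|$ and $c=|W_\ast|$ so that $a_0+a_1+c\le k$, and partition $U$ the same way. Since a path from an input to a node of $A_i$ stays inside $A_i$, the set $W_i$ dominates $U\cap V(A_i)$ within $A_i$, so the induction hypothesis yields $|U\cap V(A_i)|\le g(a_i)$. For an output node $v\in U\setminus W_\ast$, domination must occur strictly before $v$, which forces \emph{both} predecessors of $v$ to be dominated; equivalently, the $A_i$-predecessor of $v$ is a top-level node of $A_i$ dominated by $W_i$. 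As distinct such $v$ can share an $A_0$-predecessor only in pairs, $|U\cap V_{\mathrm{out}}|\le c+2\min(|D_0|,|D_1|)$, where $D_i$ is the set of top-level nodes of $A_i$ dominated by $W_i$.

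The step I expect to be the crux is a sharp estimate $|D_i|\le a_i$, i.e., the claim that in any FFT DAG the number of \emph{output-level} nodes dominated by a set $W$ is at most $|W|$. I would establish this by a separate induction on the FFT size using the same output-level decomposition: an output node outside $W$ is dominated only if both its predecessors are, and $\ell$ nodes at the penultimate level form at most $\lfloor \ell/2\rfloor$ complete butterfly-pairs, each contributing two output nodes; adding the inductive bounds for the two half-size FFTs then yields the claim. Plugging $|D_i|\le a_i$ and $g(a)=a\log a+a$ gives
\[
|U|\le g(a_0)+g(a_1)+c+2\min(a_0,a_1)=\bigl(a_0\log a_0+a_1\log a_1+2\min(a_0,a_1)\bigr)+(a_0+a_1+c).
\]
Since $a_0+a_1\le k$, Lemma~\ref{lem} (with $x=\min(a_0,a_1)$ and $m=k$) bounds the parenthesized term by $k\log k$, so $|U|\le k\log k+k=k\log 2k$, closing the induction. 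Matching examples — taking $W$ to be an aligned block of $k=2^t$ input nodes in a sufficiently large FFT, which dominates exactly $(t+1)2^t$ nodes — show the bound is attained whenever $k$ is a power of two.
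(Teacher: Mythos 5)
Your proof is correct, and its skeleton is the one the paper uses: the same three-way decomposition into two vertex-disjoint half-size FFTs plus the output level, the same partition of the dominator and of $U$, the same resulting inequality $|U|\le g(a_0)+g(a_1)+c+2\min(a_0,a_1)$, and the same invocation of Lemma~\ref{lem} to close the induction. The one place where you genuinely depart from the paper is the justification of the output-level bound $u_C\le c+2\min(a_0,a_1)$. The paper (following Hong and Kung) argues by contradiction via vertex-disjoint paths: more than $d_A$ dominated output nodes in one half of level $C$ would give more than $d_A$ independent input-to-output paths confined to $A$, each of which must be cut by $D_A$. You instead isolate and prove a clean auxiliary statement -- any set $W$ dominates at most $|W|$ output-level nodes of an FFT DAG -- by a second induction on the same decomposition, using the facts that a dominated output outside $W$ needs both butterfly predecessors dominated and that complete butterfly pairs number at most $\min(|D_0|,|D_1|)$. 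Both routes yield the identical inequality; yours has the advantage of being self-contained (it does not presuppose the existence of many vertex-disjoint input-to-output paths in the FFT), and the extracted lemma ``$|W|$ nodes dominate at most $|W|$ outputs'' is a reusable fact, while the paper's version stays closer to the original Hong--Kung text and is shorter to state. Your matching example (an aligned block of $k=2^t$ inputs dominating exactly $(t+1)2^t=k\log 2k$ nodes) also supplies the tightness claim that the paper asserts but does not prove.
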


\begin{proof}
The proof is by induction on $k$. Since $2k \log k = k \log 2k$ when $k=2$,
the base case is the same as in the proof of Hong and Kung.

We now mimic the inductive argument in the proof of Theorem~4.1 of~\cite{HongK81}.
We partition the nodes of the FFT DAG into three parts, $A$, $B$, and $C$, defined as follows.
\begin{align*}
A &= \{ \text{nodes}\ \langle w,l \rangle\ \text{s.t.}\ 0 \leq w < n/2\ \text{and}\ l < \log n \},\\
B &= \{ \text{nodes}\ \langle w,l \rangle\ \text{s.t.}\ n/2 \leq w < n\ \text{and}\ l < \log n \},\\
C &= \{ \text{nodes}\ \langle w,l \rangle\ \text{s.t.}\ l = \log n \}.
\end{align*}
The set of nodes $\langle w,l \rangle \in C$ such that $0 \leq w < n/2$ is said to be
the upper half of part $C$, whereas the set of nodes $\langle w,l \rangle \in C$ such that
$n/2 \leq w < n$ is said to be the lower half of part $C$. (See also the figure depicted
in the proof of Theorem~4.1 of~\cite{HongK81}.) The dominator is partitioned
into three parts, $D_A$, $D_B$, and $D_C$, which have $d_A$, $d_B$, and $d_C$
nodes respectively. Without loss of generality we assume $d_A \leq d_B$. The set $U$
is partitioned into three parts, $U_A$, $U_B$, and $U_C$, which have $u_A$, $u_B$,
and $u_C$ nodes respectively. If $u_C > d_C + 2d_A$ then either there are more
than $d_A$ nodes of $U_C \setminus D_C$ in the upper half of part $C$ or there
are more than $d_A$ nodes of $U_C \setminus D_C$ in the lower half of part $C$.
In either case, there are more than $d_A$ independent paths from the upper half
inputs (i.e., the set of input nodes of the FFT DAG such that $0 \leq w < n/2$)
to these nodes in $U_C \setminus D_C$. Since the set $D_A$ has only $d_A$
nodes, this is impossible. Therefore we have
\[
u_C \leq d_C + 2d_A.
\]

By inductive hypothesis we have
\begin{align*}
u_A &\leq d_A \log 2d_A, \\
u_B &\leq d_B \log 2d_B.
\end{align*}
Thus
\[
|U| \leq d_A \log 2d_A + d_B \log 2d_B + d_C + 2d_A.
\]
Combining Lemma~\ref{lem} with the hypothesis $x+y \leq m$ yields
\begin{equation}\label{eq:lemma}
x \log 2x + y \log 2y + 2x \leq m \log 2m.
\end{equation}
Since $0 \leq d_A \leq d_B$ and $d_A + d_B \leq k - d_C$,
applying~\eqref{eq:lemma} yields
\[
|U| \leq (k-d_C) \log 2(k-d_C) + d_C \leq k \log 2k,
\]
as desired.
\end{proof}

\end{document}